
\documentclass[subscriptcorrection,upint,varvw,barcolor=black,mathalfa=cal=euler,balance,hyphenate,french,pdf-a,nolists,nofoot]{asmejour} %

\hypersetup{%
	pdfauthor={Mihails Milehins},
	pdftitle={template.pdf},
	pdfkeywords={Bouc-Wen Model, Collision, Hysteresis, Impact, Rigid Body Dynamics}
}

\usepackage{paralist}
\usepackage{tikz}
\usetikzlibrary{calc}
\usetikzlibrary{arrows.meta}
\usepackage{pgfplots}
\usepackage{physics}
\usepackage{amsthm}

\newtheorem{theorem}{Theorem}[section]
\newtheorem{lem}[theorem]{Lemma}
\newtheorem{prop}[theorem]{Proposition}

\theoremstyle{definition}
\newtheorem{definition}{Definition}[section]
\theoremstyle{remark}

\iffalse
\usetikzlibrary{external}
\tikzset{external/system call={pdflatex \tikzexternalcheckshellescape -halt-on-error
    -interaction=batchmode -jobname "\image" "\texsource" && pdftops -eps "\image.pdf"}}
\tikzexternalize[shell escape=-shell-escape]

\tikzexternalize[shell escape=-shell-escape] % activate externalisation
\tikzset{external/system call={pdflatex \tikzexternalcheckshellescape -halt-on-error
-interaction=batchmode -jobname "\image" "\texsource" && 
pdftops -eps "\image.pdf"}}
\fi

\hyphenation{BWMCL}
\hyphenation{BWSHCCL}
\hyphenation{BWSHCCM}
\hyphenation{NDBWSHCCM}
\hyphenation{BWMCM}
\hyphenation{NDBWMCM}

%%%%  Journal name and optional copyright year %%%%%%%%%%%%%%

%% Omit "Journal of". If Journal Name is quite long, use \\ to insert a line break
\JourName{Computational and Nonlinear Dynamics}

%% The default copyright year is the current year
%% \PaperYear{2022} sets 2022; and \PaperYear{} omits the year entirely.
                   
%%%%  end of preamble  %%%%%%%%%%%%%%%%%%%%%%%%%%%%%%%%%%%%%%%%%%%%%%%%%%%%%%%%%%%%%%%%%%%%%%%%%%%%%%%%%

\begin{document}

% Change to your author name[s] and addresses, in the desired order of authors.
% First name, middle initial, last name
% Use title case (upper and lower case letters)
% Note usage below for corresponding author.

% To label one or more corresponding authors put "Name\CorrespondingAuthor". No space after "Name".
% An optional argument can be added if email is not in address block as
%      "Name\CorrespondingAuthor{write@to.me}"
% Can also include multiple emails and use the command more than once for multiple corresponding authors,
%      "Name\CorrespondingAuthor{write@to.him, write@to.her}"

\SetAuthorBlock{Mihails Milehins\CorrespondingAuthor}{%
	Department of Mechanical Engineering,\\
	Auburn University,\\
	Auburn, AL 36849,\\
    email: mzm0390@auburn.edu} 

\SetAuthorBlock{Dan B. Marghitu}{
	Department of Mechanical Engineering,\\
	Auburn University,\\
	Auburn, AL 36849,\\
    email: marghdb@auburn.edu}

\title{The Bouc-Wen Model for Binary Direct Collinear Collisions of Convex Viscoplastic Bodies}

\keywords{Impact and Contact Modeling, Multibody System Dynamics, Nonlinear Dynamical Systems}

%% Abstract should be no more than 250 words
\begin{abstract}
We study mathematical models of binary direct collinear collisions of convex viscoplastic bodies based on two incremental collision laws that employ the Bouc-Wen differential model of hysteresis to represent the elastoplastic behavior of the materials of the colliding bodies. These collision laws are the Bouc-Wen-Simon-Hunt-Crossley Collision Law (BWSHCCL) and the Bouc-Wen-Maxwell Collision Law (BWMCL). The BWSHCCL comprises of the Bouc-Wen model amended with a nonlinear Hertzian elastic spring element and connected in parallel to a nonlinear displacement-dependent and velocity-dependent energy dissipation element. The BWMCL comprises of the Bouc-Wen model amended with a nonlinear Hertzian elastic spring element and connected in series to a linear velocity-dependent energy dissipation element. The mathematical models of the collision process are presented in the form of finite-dimensional initial value problems. We show that the models possess favorable analytical properties (e.g., global existence, uniqueness, and boundedness of the solutions) under suitable restrictions on the values of their parameters. Furthermore, based on the results of two model parameter identification studies, we demonstrate that good agreement can be attained between experimental data and numerical approximations of the behavior of the mathematical models across a wide range of initial relative velocities of the colliding bodies while using parameterizations of the models that are independent of the initial relative velocity.
\end{abstract}

\date{\today} %% This command must come somewhere before \maketitle

\maketitle %% This command creates the author/title/abstract block. Essential!

\section{Introduction}\label{sec:introduction}

The majority of approaches for modeling of systems of rigid bodies with contacts can be classified as nonsmooth dynamics formulations or continuous formulations (e.g., see \cite{flores_modeling_2011, machado_compliant_2012}). In nonsmooth dynamics formulations, a system of rigid bodies is modeled either as a complementarity problem, a differential variational inequality, or a hybrid system in a manner such that interpenetration of bodies in contact is prevented (e.g., see \cite{panagiotopoulos_inequality_1985, pfeiffer_multibody_2004, stewart_dynamics_2011, goebel_hybrid_2012, brogliato_nonsmooth_2016, sanfelice_hybrid_2021}). In continuous formulations, the surfaces of the bodies are modeled using virtual viscoelastic, or viscoplastic elements (e.g., see \cite{terzopoulos_elastically_1987, platt_constraint_1988, moore_collision_1988}). Under most circumstances, the nonsmooth dynamics formulations require an algebraic constitutive law to achieve closure (e.g., see \cite{pfeiffer_multibody_2004, stewart_dynamics_2011, roithmayr_dynamics_2016, stronge_impact_2018, ruina_introduction_2019}), whereas the continuous formulations require a dynamic model that can describe the evolution of the contact force (e.g., see \cite{machado_compliant_2012, corral_nonlinear_2021}). Such constitutive laws and contact force models shall be collectively referred to as collision laws (e.g., see \cite{chatterjee_rigid_1997, chatterjee_two_1998} and \cite[MacSithigh (1995), as cited in][]{chatterjee_rigid_1997}). The algebraic constitutive laws shall be referred to as algebraic collision laws, and the dynamic models that describe the evolution of the contact force shall be referred to as incremental collision laws.  

The collision laws that are studied in this article are incremental collision laws for impacts that are direct and collinear (e.g., see \cite{gilardi_literature_2002, stronge_impact_2018}).\footnote{If a collision is direct and collinear, there is no lateral motion during contact, and, therefore, the effects of the lateral friction can be ignored. See \cite{stronge_impact_2018} for further information.} Such incremental collision laws can be classified based on the assumptions about the materials of the colliding bodies: elastic, plastic, viscoelastic, or viscoplastic (e.g., see \cite{stronge_impact_2018, morro_mathematical_2023}).\footnote{The terms plastic and elastoplastic will be used interchangeably.} A further classification, partially consistent with the rheological classification of materials (e.g., see \cite{reiner_classification_1945, stronge_impact_2018, morro_mathematical_2023}), shall be applied to some of the common viscoelastic and viscoplastic incremental collision laws. A viscoelastic or a viscoplastic incremental collision law shall be referred to as a Kelvin-Voigt-type collision law if the model of the material of the contact interface consists of a viscous (rate-dependent) energy dissipation element that is connected in parallel with an elastic spring element or a rate-independent hysteresis element \cite{thomson_elasticity_1865, meyer_zur_1874, voigt_ueber_1890, butcher_characterizing_2000, stronge_impact_2018}. A viscoelastic or a viscoplastic incremental collision law shall be referred to as a Maxwell-type collision law if the model of the material of the contact interface consists of a viscous energy dissipation element that is connected in series with an elastic spring element or a rate-independent hysteresis element \cite{maxwell_dynamical_1867, butcher_characterizing_2000, stronge_impact_2018}. Other configurations of rheological elements are possible (e.g., see \cite{reiner_classification_1945, butcher_characterizing_2000, morro_mathematical_2023}), but shall remain unclassified in the context of this article. The reviews in \cite{luding_collisions_1998, gilardi_literature_2002, kruggel-emden_review_2007, seifried_role_2010, machado_compliant_2012, brake_effect_2013, khulief_modeling_2013, thornton_investigation_2013, alves_comparative_2015, ahmad_impact_2016, brogliato_nonsmooth_2016, banerjee_historical_2017, skrinjar_review_2018, corral_nonlinear_2021, rodrigues_da_silva_compendium_2022, wang_research_2022, wang_review_2022, flores_contact-impact_2023, ding_review_2024} provide descriptions of many important incremental collision laws that have been proposed in the literature in the past. The remainder of this section also contains a brief overview of previously proposed incremental collision laws and mathematical models of the behavior of the materials that can serve as a foundation for the development of incremental collision laws. 

The simplest incremental collision law for elastic bodies is based on the assumption that the evolution of the contact force is governed by Hooke's Law: $F = k \delta$ \cite{hooke_potentia_1678}. Here, $k \in \mathbb{R}_{> 0}$ denotes the effective stiffness of the contact interface, $\delta \in \mathbb{R}$ denotes the relative displacement of the bodies measured at the contact point along the common normal direction and $F \in \mathbb{R}$ denotes the contact force.\footnote{Notation is explained in Appendix \ref{sec:notation}.} A natural extension of this collision law assumes that the relationship between the relative displacement and the contact force is described by the Hertzian power law: $F = k \abs{\delta}^{n-1} \delta$ with $n \in \mathbb{R}_{\geq 1}$ \cite{hertz_uber_1881, hertz_uber_1882}.

One of the simplest incremental collision laws for viscoelastic bodies, the Kelvin-Voigt collision law, is based on the assumption that the contact interface behaves as a Kelvin-Voigt material: $F = k \delta + c \dot{\delta}$ (\cite{thomson_elasticity_1865, meyer_zur_1874, voigt_ueber_1890}, see also \cite{goldsmith_impact_1960, dubowsky_dynamic_1971, dubowsky_dynamic_1971-1, hunt_coefficient_1975, butcher_characterizing_2000}). Here, $c \in \mathbb{R}_{\geq 0}$ represents the viscous damping coefficient.

Since the initial relative velocity $\dot{\delta}(0)$ of the colliding objects is usually not zero, the Kelvin-Voigt collision law may result in an unphysical discontinuity in the evolution of the contact force \cite{hunt_coefficient_1975}. A linear incremental collision law for viscoelastic bodies that avoids this issue, the Maxwell collision law, is based on the assumption that the contact interface behaves as a Maxwell material: $F = k \delta_1 = c \dot{\delta}_2$ with $\delta = \delta_1 + \delta_2$. Here, $\delta_1$ is the displacement of an internal elastic spring element, $\delta_2$ is the displacement of an internal viscous energy dissipation element \cite{maxwell_dynamical_1867, johnson_contact_1985, butcher_characterizing_2000, argatov_mathematical_2013}. Further linear incremental collision laws can be constructed from other types of models of materials based on their rheological classification. For example, incremental collision laws based on the standard linear solid model \cite{poynting_text-book_1902, jeffreys_viscosity_1917, ishlinsky_vibrations_1940, zener_elasticity_1948, morro_mathematical_2023} were studied in \cite{mills_robotic_1992, butcher_characterizing_2000, argatov_mathematical_2013}.

A nonlinear incremental collision law $F = k \abs{\delta}^{n - 1} \delta + \chi \abs{\delta}^n \dot{\delta}$ with $n \in \mathbb{R}_{\geq 1}$ that combines the Hertzian elastic spring model with a nonlinear displacement-dependent energy dissipation term was proposed independently in \cite[Simon (1967), as cited in][]{brogliato_nonsmooth_2016} under a restricted range of parameters and in \cite{hunt_coefficient_1975} (see also \cite{veluswami_multiple_1975, veluswami_multiple_1975-1}) in its full generality. Here, $\chi \in \mathbb{R}_{\geq 0}$ denotes the viscous damping coefficient. Due to the displacement-dependent energy dissipation term, the collision law avoids the discontinuity in the evolution of the contact force that is associated with the Kelvin-Voigt collision law. Usually, it is assumed that $n$ and $k$ are constant for a given choice of geometry and materials of the colliding bodies, whereas $\chi$ may also depend on the initial relative velocity of the colliding bodies. The choice of the parameters of the model is a subject of ongoing research (e.g., see \cite{herbert_shape_1977, lee_dynamics_1983, lankarani_contact_1990, lankarani_continuous_1994, stoianovici_critical_1996, marhefka_compliant_1999, gonthier_regularized_2004, zhang_compliant_2004, zhiying_analysis_2006, ye_note_2009, hu_determination_2011, flores_continuous_2011, gharib_new_2012, khatiwada_generic_2014, jacobs_modeling_2015, hu_dissipative_2015, wang_modeling_2018, carvalho_exact_2019, sherif_models_2019, safaeifar_new_2020, yu_improved_2020, zhang_continuous_2020, zhao_spring-damping_2021, wang_energy_2022, ramaswamy_continuous_2023, tan_influence_2023, sheikhi_azqandi_optimal_2024}). 

Since the publication of \cite{hunt_coefficient_1975}, several authors proposed or employed a number of alternative nonlinear incremental collision laws for viscoelastic bodies of the Kelvin-Voigt type \cite{tatara_study_1982, kuwabara_restitution_1987, ristow_simulating_1992, tsuji_lagrangian_1992, lee_angle_1993, luding_anomalous_1994, hertzsch_low-velocity_1995, shafer_force_1996, brilliantov_collision_1996, brilliantov_model_1996, morgado_energy_1997, falcon_behavior_1998, schwager_coefficient_1998, jankowski_non-linear_2005, jankowski_analytical_2006, bordbar_modeling_2007, schwager_coefficient_2008, choi_efficient_2010, mahmoud_modified_2011, muller_collision_2011, roy_damping_2012, zheng_finite_2012, alizadeh_development_2013, azad_new_2014, brilliantov_dissipative_2015, goldobin_collision_2015, wang_advanced_2017, kaviani_rad_frictional_2018, wang_contact-impact_2019, poursina_optimal_2020, poursina_characterization_2020, wang_further_2020, zhang_continuous_2020, chatterjee_approximate_2022, jia_application_2022, zhang_continuous_2022, zhang_continuous_2022-1, nikravesh_determination_2023, wang_development_2023-1, huang_general_2024, poursina_new_2024, wang_enhanced_2024, zhang_continuous_2024, wang_characteristics_2025}. Most of these models have the form $F = k \abs{\delta}^{n - 1} \delta + \chi \abs{\delta}^p \dot{\delta}$ with $p \in \mathbb{R}_{\geq 0}$ such that $p \neq n$ and/or parameters that differ during the compression and the restitution phases of the collision process. Further incremental collision laws for viscoelastic bodies that are not of the Kelvin-Voigt type were proposed or employed in \cite{khusid_collision_1986, atanackovic_viscoelastic_2004, argatov_mathematical_2013, zbiciak_dynamic_2015, argatov_impact_2016, jian_normal_2019, askari_mathematical_2021, argatov_viscoelastic_2024, ding_approximate_2024}. 

Under some of the common conditions, the materials of the colliding bodies undergo plastic deformation (e.g., see \cite{tresca_memoire_1869, saint-venant_memoire_1871, crook_study_1952, barnhart_transverse_1955, barnhart_stresses_1957, goldsmith_impact_1960}). In this case, there exists a permanent non-zero relative displacement at the time of the separation of the colliding bodies. Furthermore, it is often the case that the energy dissipation during contact is largely a rate-independent phenomenon. While some of the incremental collision laws that were designed for the collisions of viscoelastic bodies also yield a non-zero relative displacement at the time of the separation (e.g., see \cite{poursina_new_2024}), the energy dissipation mechanisms associated with these collision laws are strongly dependent on the relative velocity of the bodies in contact. Therefore, arguably, these collision laws are less suitable for the description of the collisions of plastic and viscoplastic bodies.

The simplest incremental collision laws for plastic bodies are rate-independent and based on the assumption that the relationship between the relative displacement and the contact force is different in compression and restitution (\cite{tresca_memoire_1869, saint-venant_memoire_1871, crook_study_1952, barnhart_transverse_1955, barnhart_stresses_1957, goldsmith_impact_1960, walton_viscosity_1986, lankarani_continuous_1994, gharib_new_2012, iqbal_coefficient_2025} and \cite[Kadomtsev (1990), as cited in][]{biryukov_dynamic_2002}). In \cite{andrews_theory_1930, tabor_simple_1948, tabor_hardness_1951, johnson_contact_1985, chang_normal_1992, ning_elastic-plastic_1993, sadd_contact_1993, yigit_impact_1994, yigit_use_1995, thornton_coefficient_1997, vu-quoc_elastoplastic_1999, zhao_asperity_1999, beig_contact_2000, li_theoretical_2001, etsion_unloading_2005, weir_coefficient_2005, mangwandi_coefficient_2007, luding_cohesive_2008, du_energy_2009, antonyuk_energy_2010, jackson_predicting_2010, brake_analytical_2012, brake_analytical_2015, ma_contact_2015, ghaednia_comprehensive_2016, mukhopadhyay_theoretical_2023} and \cite[Kil'chevskii (1976), Aleksandrov et al (1984), Aleksandrov and Romalis (1986), as cited in][]{biryukov_dynamic_2002}, the authors proposed or employed several multi-stage (e.g., elastic loading, elastic-plastic loading, elastic unloading) incremental collision laws for plastic bodies. A variety of incremental collision laws for viscoplastic bodies were proposed or employed in \cite{storakers_elastic_2000, tomas_particle_2000, ismail_impact_2008, yigit_nonlinear_2011, burgoyne_strain-rate-dependent_2014, christoforou_inelastic_2016, ahmad_improved_2016, borovin_nonlinear_2019, wang_extension_2022, wang_development_2023, lin_blade-coating_2024, wang_enhanced_2024}. The majority of these collision laws combine a bilinear or multi-stage elastoplastic element with a linear viscous energy dissipation element.

A different line of research in the area of incremental collision laws was initiated in \cite{kikuuwe_incorporating_2007} and continued in \cite{xiong_differential_2013, xiong_contact_2014}. The collision laws proposed in \cite{kikuuwe_incorporating_2007, xiong_differential_2013, xiong_contact_2014} were designed to be suitable for the description of a wide variety of collision phenomena. Their development was also driven by an attempt to overcome various disadvantages of some of the traditional incremental collision laws. These disadvantages include an unnatural sticking force that can appear when using the Simon-Hunt-Crossley collision law (and some of its extensions) under certain conditions, the inability of the Simon-Hunt-Crossley collision law (and some of its extensions) to describe the non-zero indentation at the time of the separation (usually due to either plastic deformation \cite{tresca_memoire_1869, saint-venant_memoire_1871, crook_study_1952} or elastic aftereffect \cite{boltzmann_theorie_1874, wang_advanced_2017}), as well as the piecewise nature of the traditional collisions laws for viscoplastic bodies. While the aforementioned collision laws overcome these issues, they were originally expressed in the form of differential-algebraic inclusions. Nonetheless, the collision laws proposed in \cite{xiong_differential_2013, xiong_contact_2014} were also reformulated as ordinary differential equations.

As mentioned in \cite{butcher_characterizing_2000, brogliato_nonsmooth_2016}, almost any model that can describe the dynamic behavior of a material can be used as a foundation for the development of incremental collision laws. There exist several general models of hysteresis that are suitable for the description of the behavior of plastic or viscoplastic materials (e.g., see \cite{saint-venant_memoire_1871, schwedoff_recherches_1889, bingham_fluidity_1922, masing_eigenspannungen_1926, schofield_relationship_1932, ramberg_description_1943, reiner_classification_1945, pisarenko_vibrations_1962, jennings_response_1963, rosenblueth_kind_1964, iwan_distributed-element_1966, bouc_modemathematique_1971, ozdemir_nonlinear_1976, chen_constitutive_1980, jayakumar_modeling_1987, monteiro_marques_existence_1994, bastien_study_2000, charalampakis_boucwen_2009, biswas_reduced-order_2014, biswas_two-state_2015, brogliato_nonsmooth_2016, morro_mathematical_2023}). However, little research has been done to investigate their applicability to the construction of general-purpose incremental collision laws for viscoplastic bodies, although limited progress has been made (e.g., \cite{antonyuk_energy_2010, xiong_differential_2013, xiong_contact_2014, brogliato_nonsmooth_2016, borovin_nonlinear_2019, maleki_modified_2023, maleki_modified_2024} may be considered relevant in this context).

This article showcases a study of several incremental collision laws based on the Bouc-Wen differential model of hysteresis. The Bouc-Wen model is a general parameterizable rate-independent differential model of hysteresis. It was proposed in \cite{bouc_forced_1967, bouc_modemathematique_1971} and extended in \cite{wen_method_1976}. The model and its extensions (e.g., see \cite{sivaselvan_hysteretic_2000} and \cite{charalampakis_boucwen_2009}) have been used successfully in a variety of fields (e.g., see \cite{ikhouane_systems_2007}), including vibro-impacts \cite{maleki_modified_2023, maleki_modified_2024}. In the context of the study described in this article, the Bouc-Wen model was chosen due to its simplicity, popularity, and its wide scope of applicability.

\section{Contributions and Outline}\label{sec:contributions}

The primary contributions of this article are a description of an analytical study of two mathematical models of binary direct collinear collisions of convex viscoplastic bodies based on the Bouc-Wen model of hysteresis, and a description of two model parameter identification studies showcasing the possibility of attainment of good agreement between the experimental data and the data stemming from the numerical simulations of the aforementioned mathematical models across a wide range of initial relative velocities of the colliding bodies while using parameterizations of the models that are independent of the initial relative velocity. The article also provides a review of recent research on the subject of incremental collision laws for binary frictionless collisions.

The remainder of the article is organized as follows:
\begin{compactitem}
\item Section \ref{sec:mps} introduces a high-level model of the physical system that is studied in the remainder of the article.
\item Section \ref{sec:BWM} introduces the Bouc-Wen model and a trivial incremental collision law based on the Bouc-Wen model, the Bouc-Wen-Hertz Collision Law.
\item Section \ref{sec:KVCL} introduces the Bouc-Wen-Simon-Hunt-Crossley Collision Law, a Kelvin-Voigt-type collision law based on the Bouc-Wen model.
\item Section \ref{sec:MCL} introduces the Bouc-Wen-Maxwell Collision Law, a Maxwell-type collision law based on the Bouc-Wen model.
\item Section \ref{sec:ID} describes two methodologies for the identification of the parameters for the collision laws.
\item Section \ref{sec:conclusions} provides conclusions and recommendations.
\item Appendices \ref{sec:notation}-\ref{sec:analysis_BWMCM} describe the mathematical conventions and provide the proofs of the main results.
\end{compactitem}

\section{Model of the Physical System}\label{sec:mps}

The discussion that follows is with reference to Fig. \ref{fig:nc}. As mentioned previously, mathematical notation is explained in Appendix \ref{sec:notation}. The notational conventions for mechanics are adopted from \cite{roithmayr_dynamics_2016} and \cite{stronge_impact_2018}.  The units are seldom stated explicitly: it is assumed that a consistent system of units is used for all dimensional quantities. The boldfaced symbols will denote vectors, with the notation $\mathbf{r}_{B/A}$ reserved for the displacement of the point $B$ relative to the point $A$. Let $\mathcal{N}$ denote an inertial frame of reference with the inertial origin $O$, and let $\hat{\mathbf{n}} = (\hat{\mathbf{n}}_1, \hat{\mathbf{n}}_2, \hat{\mathbf{n}}_3)$ be the canonical right-handed orthonormal coordinate system. The circumflex over baldfaced letters shall be used to indicate normalized vectors. Any vector $\mathbf{v}$ can be identified unambiguously with the set of its components in $\hat{\mathbf{n}}$: $\mathbf{v} \triangleq v_1 \hat{\mathbf{n}}_1 + v_2 \hat{\mathbf{n}}_2 + v_3 \hat{\mathbf{n}}_3$. 

Suppose that $\mathcal{B}_1$ is a compact and strictly convex rigid body. Suppose that $\mathcal{B}_2$ is a convex rigid body with a topologically smooth surface. The bodies are assumed to come into contact at the point $C \triangleq O$\footnote{$C$ is merely an abbreviation for $O$. Both symbols represent the same point in space.} at the time $t_0 \triangleq 0 \in \mathbb{R}$, with the tangent plane spanned by $\hat{\mathbf{n}}_2$ and $\hat{\mathbf{n}}_3$, and with the common normal direction $\hat{\mathbf{n}}_1$. Suppose that the center of mass of $\mathcal{B}_i$\footnote{The index $i$ ranges over $\{ 1, 2\}$ here and in the remainder of this section.} is located at $G_i$, which lies on the line $AB$ through $C$ and parallel to $\hat{\mathbf{n}}_1$. The point located on the boundary of $\mathcal{B}_i$ that coincides with $C$ will be denoted $C_i$. For notational convenience, define $\mathbf{r}_{i} \triangleq \mathbf{r}_{C_i/G_i}$. Without the loss of generality, it shall be assumed that $\mathbf{r}_1 \cdot \hat{\mathbf{n}}_1 < 0$ and $\mathbf{r}_2 \cdot \hat{\mathbf{n}}_1 > 0$. The velocity fields of both bodies are assumed to be uniform and parallel to the line $AB$. The configuration, as hereinbefore described, corresponds to a binary direct collinear impact (e.g., see \cite{stronge_impact_2018}).

Referring to \cite{stronge_impact_2018}, it shall be assumed that while the bodies remain in contact, the motion of the system is governed by the laws of rigid body dynamics (Newton, \cite{newton_mathematical_1729}), with the contact point described as an infinitesimal deformable particle. The mass of $\mathcal{B}_i$ shall be denoted as $m_i \in \mathbb{R}_{>0}$. The force that acts on the body $\mathcal{B}_i$ at the contact point $C_i$ shall be denoted as $\mathbf{F}^{C_i}$. It is postulated that 
\begin{equation}
\mathbf{F}^{C_1} = -\mathbf{F}^{C_2} \triangleq \mathbf{F} = F \hat{\mathbf{n}}_1
\end{equation} 
for $F(\cdot) \in \mathbb{R}$ with $F(0) = 0$. Due to the nature of the collision process, a single generalized coordinate is sufficient to describe the motion of each body during contact: $x_i$ shall refer to the displacement of $G_i$ along $\mathbf{n}_1$ from its initial position. The equations of motion are $\ddot{x}_1 = m_1^{-1} F$ and $\ddot{x}_2 = - m_2^{-1} F$, with $x_1(0) = x_2(0) = 0$, $\dot{x}_1(0) = v_{1,0} \in \mathbb{R}$, and $\dot{x}_2(0) = v_{2,0} \in \mathbb{R}$ such that $v_0 \triangleq -(v_{1,0} - v_{2, 0}) \in \mathbb{R}_{>0}$. Denoting $m \triangleq m_1 m_2 (m_1 + m_2)^{-1}$, $x \triangleq x_1 - x_2$, and $v \triangleq \dot{x} = \dot{x}_1 - \dot{x}_2$, the equations of motion can be transformed to
\begin{equation}\label{eq:main}
\begin{cases}
\dot{x} = v & x(0) = 0\\
\dot{v} = m^{-1} F & v(0) = -v_0 
\end{cases}
\end{equation}
The form of the force $F$ depends on the chosen collision law. As an aside, it should be noted that if, by abuse of notation, $m_2 = +\infty$, then $m^{-1} = m_1^{-1}$, which corresponds to the collision of a body $\mathcal{B}_1$ of finite mass with a stationary body $\mathcal{B}_2$.

Provided that a solution of the initial value problem (IVP) given by Eq. \eqref{eq:main} (including any possible amendments associated with $F$) exists and is unique on a non-degenerate time interval $I \subseteq \mathbb{R}$ with $0 \in I$, the time of the separation $t_s \in \mathbb{R}_{> 0} \cup \{ + \infty \}$ shall be defined as
\begin{equation}\label{eq:main_t_s}
t_s \triangleq \inf \{ t \in I_{\geq 0} : F(t) \leq 0 \wedge 0 \leq v(t) \}
\end{equation}
Since $v$ is continuous and $v(0) < 0$, $t_s$ is well defined. It is important to note that $t_s$ may not be finite.

Informally, the Coefficient of Restitution (CoR) $e \in \mathbb{R}$ for binary direct collinear collisions can be defined as the additive inverse of the value of the ratio of the relative velocity at the time of the separation $v(t_s)$ to the value of the relative velocity at the time of the collision $v(0)$ (e.g., see \cite{stronge_impact_2018}).\footnote{It should be remarked that the type of CoR that is employed in this study is usually referred to as the kinematic CoR and attributed to Sir Isaac Newton \cite{newton_mathematical_1729}. However, there exist other types of CoRs, such as the kinetic CoR due to Siméon Denis Poisson \cite{poisson_treatise_1842}, and the energetic CoR due to William Stronge (e.g., see \cite{stronge_rigid_1990} and \cite{stronge_impact_2018}). It should also be remarked that no explicit restrictions are imposed on the value of the CoR in this article, but normally it lies in the interval $[0, 1] \subseteq \mathbb{R}$.} More formally, for any physical system described by Eq. \eqref{eq:main}, $e$ shall be given by 
\begin{equation}\label{eq:main_cor}
e \triangleq 
\begin{cases}
-v(t_s)/v(0) & t_s \neq +\infty \\
0 & t_s = +\infty
\end{cases}
\end{equation}
provided that the solution of the IVP given by Eq. \eqref{eq:main} exists and is unique on some non-degenerate time interval $I \subseteq \mathbb{R}$ with $0 \in I$.

\begin{figure}
\centering
\begin{tikzpicture}[>={Stealth[scale=0.6]}]

	\coordinate (O) at (0, 0);
	\coordinate (G1) at (0, 1);
	\coordinate (G2) at (0, -2);
	\coordinate (E1) at (0, 0.5);
	\coordinate (E2) at (0, -0.5);

    \draw[fill=black](0,0) circle (1 pt) node [above  right] {$C$};
    \draw[fill=black](G1) circle (1 pt) node [left] {$G_1$};
    \draw[fill=black](G2) circle (1 pt) node [left] {$G_2$};

    \draw[->, line width=1](3,1) -- (3,2) node [right] {$\hat{\mathbf{n}}_1$};
    \draw[->, line width=1](3,1) -- (2,1) node [above] {$\hat{\mathbf{n}}_2$};

    \draw[dotted, line width=1](-2.5,0) -- (2.5,0);
    \draw[dotted, line width=1](0,2) node [right] {$A$} -- (0,-3.5) node [right] {$B$};    
    
    \draw[->, line width=1](O) -- (0,0.5) node [left] {$\mathbf{F}$};
    \draw[->, line width=1](O) -- (0,-0.5) node [left] {$\mathbf{F}$};
    
    \draw[->, line width=0.75, dashed](-2,1.25) node [left] {$\mathcal{B}_1$} -- (-1.25,1);
    \draw[->, line width=0.75, dashed](-1.5,-1.5) node [left] {$\mathcal{B}_2$} -- (-0.5,-1.5);
    
    \draw plot [smooth cycle, tension=.8] coordinates {(0,0) (1.5,1) (0,1.5) (-1.5,1)};
    \draw plot [smooth cycle, tension=.8] coordinates {(0,0) (0.75,-2) (0,-3) (-0.75,-2)};
    
    \draw (0,0.2) -- (-0.2,0.2) -- (-0.2,0);
    \draw (3,1.2) -- (2.8,1.2) -- (2.8,1);

\end{tikzpicture}
\caption{System diagram}\label{fig:nc}
\end{figure}

\section{The Bouc-Wen-Hertz Collision Law}\label{sec:BWM}

The primary references for the Bouc-Wen differential model of hysteresis are \cite{ikhouane_systems_2007, ikhouane_bounded_2004, ikhouane_hysteretic_2005, ikhouane_hysteretic_2005-1, ikhouane_analytical_2006, ikhouane_dynamic_2007, ikhouane_variation_2007}. A collision law based on the Bouc-Wen model, the Bouc-Wen Collision Law (BWCL), can be specified as\footnote{This form of the Bouc-Wen model was employed, for example, in \cite{ma_parameter_2004}.}
\begin{equation}\label{eq:BWCL}
\begin{cases}
\dot{x} = u\\
\dot{z} = A u - \beta \abs{z}^{n - 1} z \abs{u} - \gamma \abs{z}^n u \\
F =  - \alpha k x - \alpha_c k z
\end{cases}
\end{equation}
Here, $x, z \in \mathbb{R}$ are internal state variables, $u \in \mathbb{R}$ is an input variable that is meant to represent the relative velocity of the colliding bodies (e.g., $v$ in Eq. \eqref{eq:main}), $F \in \mathbb{R}$ is an output variable that is meant to represent the contact force between the colliding bodies. The model is parameterized by $A, k \in \mathbb{R}_{>0}$, $\alpha \in (0, 1)$, $\beta \in \mathbb{R}_{\geq 0}$, $\gamma \in [- \beta, \beta]$, and $n \in \mathbb{R}_{\geq 1}$, with $\alpha_c \triangleq 1 - \alpha$.\footnote{In what follows, $\alpha_c$ will always be used as an abbreviation for $1 - \alpha$.}

To accommodate the nonlinearities that are present in some of the traditional contact force models \cite{hertz_uber_1881, hertz_uber_1882}, the output function of the BWCL is augmented to yield\footnote{The form of the output $F$ was chosen heuristically. The choice of the best form of $F$ that takes into account the Hertzian nonlinearity is a potential avenue for future research.}
\begin{equation}\label{eq:BWHCL}
\begin{cases}
\dot{x} = u\\
\dot{z} = A u - \beta \abs{z}^{n - 1} z \abs{u} - \gamma \abs{z}^n u \\
F = - \alpha k \abs{x}^{p - 1} x - \alpha_c k \abs{z}^{p - 1} z
\end{cases}
\end{equation}
Here, $p \in \mathbb{R}_{\geq 1}$ is an additional parameter. The model given by Eq. \eqref{eq:BWHCL} shall be referred to as the Bouc-Wen-Hertz Collision Law (BWHCL).

\section{The Bouc-Wen-Simon-Hunt-Crossley Collision Law}\label{sec:KVCL}

The BWHCL can be augmented further to yield a Kelvin-Voigt-type collision law:
\begin{equation}\label{eq:BWSHCCL}
\begin{cases}
\dot{x} = u\\
\dot{z} = A u - \beta \abs{z}^{n - 1} z \abs{u} - \gamma \abs{z}^n u \\
F = - \alpha k \abs{x}^{p - 1} x - \alpha_c k \abs{z}^{p - 1} z - c \abs{x}^p u
\end{cases}
\end{equation}
Here, $c \in \mathbb{R}_{\geq 0}$ is an additional parameter. This collision law was inspired by the Simon-Hunt-Crossley collision law and shall be referred to as the Bouc-Wen-Simon-Hunt-Crossley Collision Law (BWSHCCL). A diagrammatic representation of the lumped element model upon which the BWSHCCL is based is shown in Fig. \ref{fig:BWSHCCL_LEM}.

\begin{figure}
\centering
\begin{tikzpicture}[>={Stealth[scale=0.6]}]

    \draw[line width=1](0, 0) -- (1, 0);
    \draw[line width=1](1, -0.25) -- (1, 0.25);
    \draw[line width=1](0.75, 0.25) -- (1.25, 0.25);
    \draw[line width=1](0.75, -0.25) -- (1.25, -0.25);
    \draw[line width=1](1.25, -0.25) -- (1.25, 0.25);
    \draw[line width=1](1.25, 0) -- (2, 0);
    \draw plot [smooth, tension=.8] coordinates {(0.5, -0.5) (0.7, -0.25) (1, 0) (1.3, 0.25) (1.5, 0.5)};    

    \draw[line width=1](0, -2) -- (0.25, -2);  
	\draw[draw=black] (0.25, -2.5) rectangle node{BWHCL} (1.75, -1.5);		
    \draw[line width=1](1.75, -2) -- (2, -2); 
    
    \draw[fill=black](0, -1) circle (2 pt);
    \draw[fill=black](2, -1) circle (2 pt);
    
    \draw[line width=1](0, 0) -- (0, -2);
  	\draw[line width=1](2, 0) -- (2, -2);
  	
    \draw[line width=1, dashed](0, -2) -- (0, -3.25);
    \draw[line width=1, dashed](2, -2) -- (2, -3.25);

    \draw[<->, line width=1](0, -3) -- node[above] {$x$} (2, -3);

\end{tikzpicture}
\caption{BWSHCCL: diagrammatic representation of the lumped element model (the upper part of the diagram depicts the nonlinear viscous energy dissipation element, the lower part of the diagram depicts the BWHCL element)}\label{fig:BWSHCCL_LEM}
\end{figure}

A feedback interconnection of the abstract collision model given by Eq. \eqref{eq:main} and the BWSHCCL results in the following model:
\begin{equation}\label{eq:BWSHCCM}
\begin{cases}
\dot{x} = v \\
\dot{z} = A v - \beta \abs{z}^{n - 1} z \abs{v} - \gamma \abs{z}^n v \\
\dot{v} = - \alpha \frac{k}{m} \abs{x}^{p - 1} x - \alpha_c \frac{k}{m} \abs{z}^{p - 1} z - \frac{c}{m} \abs{x}^p v \\
\begin{matrix} x(0) = 0, & z(0) = 0, & v(0) = -v_0 \end{matrix}
\end{cases}
\end{equation}
The model given by Eq. \eqref{eq:BWSHCCM} shall be referred to as the Bouc-Wen-Simon-Hunt-Crossley Collision Model (BWSHCCM).

The BWSHCCM will now be nondimensionalized.\footnote{See \cite{logan_applied_2013} for a description of the methodology that was employed for the nondimensionalization of the BWSHCCM.} 
The relevant fundamental dimensions are mass $M$, length $L$, and time $T$. The dimensions of the state variables are given by $[x] = L$, $[z] = L$, and $[v] = L T^{-1}$. The parameters $A$, $\alpha$, $n$, and $p$ are dimensionless. The dimensions of the remaining parameters are given by $[m] = M$, $[k] = M L^{1 - p} T^{-2}$, $[c] = M L^{-p} T^{-1}$, $[\beta] = L^{-n}$, $[\gamma] = L^{-n}$, and $[v_0] = L T^{-1}$. The nondimensionalized model will evolve with respect to the nondimensionalized time variable $T \triangleq t/T_c$, with the time scale $T_c \in \mathbb{R}_{> 0}$ given by
\begin{equation}\label{eq:BWSHCCM_T_c}
T_c \triangleq \left( \frac{1}{\alpha + \alpha_c A^p} \right)^{\frac{1}{p + 1}} \left( \frac{m}{k} \right)^{\frac{1}{p + 1}} v_0^{-\frac{p - 1}{p + 1}}
\end{equation}
The nondimensionalized state variables are given by $X \triangleq x/X_c$, $Z \triangleq z/Z_c$, $V \triangleq v/(X_c/T_c)$, with the spatial scales $X_c, Z_c \in \mathbb{R}_{>0}$ given by
\begin{equation}\label{eq:BWSHCCM_X_c}
X_c \triangleq \left( \frac{1}{\alpha + \alpha_c A^p} \right)^{\frac{1}{p + 1}} \left( \frac{m}{k} \right)^{\frac{1}{p + 1}} v_0^{\frac{2}{p + 1}}
\end{equation}
\begin{equation}\label{eq:BWSHCCM_Z_c}
Z_c \triangleq \left( \frac{1}{\alpha + \alpha_c A^p} \right)^{\frac{1}{p + 1}} A \left( \frac{m}{k} \right)^{\frac{1}{p + 1}} v_0^{\frac{2}{p + 1}}
\end{equation}
respectively. Introduction of the dimensionless parameters 
\begin{equation}
B \triangleq \left( \frac{A^{p + 1}}{\alpha + \alpha_c A^p} \right)^{\frac{n}{p + 1}} \frac{\beta}{A} \left( \frac{m}{k} \right)^{\frac{n}{p + 1}} v_0^{\frac{2 n}{p + 1}}
\end{equation}
\begin{equation}
\mathit{\Gamma} \triangleq \left( \frac{A^{p + 1}}{\alpha + \alpha_c A^p} \right)^{\frac{n}{p + 1}} \frac{\gamma}{A} \left( \frac{m}{k} \right)^{\frac{n}{p + 1}} v_0^{\frac{2 n}{p + 1}}
\end{equation}
\begin{equation}
\kappa \triangleq \frac{\alpha}{\alpha + \alpha_c A^p}
\end{equation}
\begin{equation}
\sigma \triangleq \frac{1}{\alpha + \alpha_c A^p} \frac{c}{k} v_0
\end{equation}
and the abbreviation $\kappa_c \triangleq 1 - \kappa$, and nondimensionalization of Eq. \eqref{eq:BWSHCCM} results in the following model:
\begin{equation}\label{eq:BWSHCCM_nd}
\begin{cases}
\dot{X} = V \\
\dot{Z} = V - B \abs{Z}^{n - 1} Z \abs{V} - \mathit{\Gamma} \abs{Z}^n V \\
\dot{V} = - \kappa \abs{X}^{p - 1} X - \kappa_c \abs{Z}^{p - 1} Z - \sigma \abs{X}^p V \\
\begin{matrix} X(0) = 0, & Z(0) = 0, & V(0) = -1 \end{matrix}
\end{cases}
\end{equation}
This model shall be referred to as the Nondimensionalized Bouc-Wen-Simon-Hunt-Crossley Collision Model (NDBWSHCCM). Most of the further analysis will be based on the NDBWSHCCM rather than the BWSHCCM. 

Under the assumption that the NDBWSHCCM is parameterized by $B \in \mathbb{R}_{> 0}$, $\mathit{\Gamma} \in (-B, B)$, $\kappa \in (0, 1)$, $\sigma \in \mathbb{R}_{>0}$, $n, p \in \mathbb{R}_{\geq 1}$, there exists a unique bounded solution of the NDBWSHCCM on any time interval $[0, T_e)$ with $T_e \in \mathbb{R}_{>0} \cup \{ +\infty \}$. The set of equilibrium points of the NDBWSHCCM is
\begin{equation}
\mathcal{E} \triangleq \left\{ \left(X, - \left(\frac{\kappa}{\kappa_c} \right)^{\frac{1}{p}} X, 0 \right) : X \in \mathbb{R} \right\}
\end{equation}
Each solution of the NDBWSHCCM converges to a subset of $\mathcal{E}$ at a finite distance from the origin. See Appendix \ref{sec:analysis_BWSHCCM} for further details.\footnote{There is no reason to believe that the chosen range of the parameters provides a necessary condition for any of the results stated in this paragraph. However, arguably, the range is sufficiently wide for most engineering applications.} 

In applications, it is often of interest to understand how a given physical system behaves with respect to the changes in $v_0$, the absolute value of the relative velocity immediately before the collision. While $v_0$ appears only in the initial condition in the BWSHCCM, multiple parameters of the NDBWSHCCM depend on $v_0$. The dependence of the parameters of the NDBWSHCCM on $v_0$ can be described explicitly by the function $\mathcal{P} : \mathbb{P}^{*} \times \mathbb{R}_{>0} \longrightarrow \mathbb{P}$ that maps $(B_b, \mathit{\Gamma}_b, \kappa, \sigma_b, n, p) \in \mathbb{P}^{*}$ and $v_0 \in \mathbb{R}_{> 0}$ to 
\[
\left( B_b v_0^{\frac{2 n}{p + 1}}, \mathit{\Gamma}_b v_0^{\frac{2 n}{p + 1}}, \kappa, \sigma_b v_0, n, p \right) \in \mathbb{P}
\] 
where $\mathbb{P}^{*} = \mathbb{P} \subseteq \mathbb{R}^6$ consist of all $P = (B, \mathit{\Gamma}, \kappa, \sigma, n, p)$ such that $B \in \mathbb{R}_{> 0}$, $\mathit{\Gamma} \in (-B, B)$, $\kappa \in (0, 1)$, $\sigma \in \mathbb{R}_{>0}$, $n, p \in \mathbb{R}_{\geq 1}$. It should be noted that different symbols are used for $\mathbb{P}^{*}$ and $\mathbb{P}$ because (informally) they carry different semantics and they are meant to be used in different contexts. However, the sets are identical from the perspective of set theory. The elements of $\mathbb{P}^{*}$ shall be referred to as the base parameters of the NDBWSHCCM. The base parameters are nothing more than convenient abstractions for the study of the behavior of a given physical system represented by the NDBWSHCCM with respect to the changes in the initial relative velocity (e.g., see Section \ref{sec:ID}).

The function $\Phi : \mathbb{P} \times \mathbb{R}_{\geq 0} \longrightarrow \mathbb{R}^3$ shall be defined in a manner such that $\Phi_P(T)$ represents the value of the solution of the NDBWSHCCM parameterized by $P \in \mathbb{P}$ at the time $T \in \mathbb{R}_{\geq 0}$. The contact force $F : \mathbb{P} \times \mathbb{R}^3 \longrightarrow \mathbb{R}$ for the NDBWSHCCM shall be defined as 
\begin{equation}
F_P(X, Z, V) \triangleq - \kappa \abs{X}^{p - 1} X - \kappa_c \abs{Z}^{p - 1} Z - \sigma \abs{X}^p V
\end{equation}
for any $(X, Z, V) \in \mathbb{R}^3$ and $P \in \mathbb{P}$ such that $\kappa = P_3$, $\sigma = P_4$, and $p = P_6$. With reference to Eq. \eqref{eq:main_t_s}, the time of the separation $T_s : \mathbb{P} \longrightarrow \mathbb{R}_{> 0} \cup \{ +\infty \}$ for the NDBWSHCCM shall be defined as
\begin{equation}
T_s (P) \triangleq \inf \{ T \in \mathbb{R}_{\geq 0} : F_P(\Phi_P(T)) \leq 0 \wedge 0 \leq \Phi_{P,3}(T) \}
\end{equation}
for all $P \in \mathbb{P}$. With reference to Eq. \eqref{eq:main_cor}, CoR $e : \mathbb{P} \longrightarrow \mathbb{R}$ for the NDBWSHCCM shall be defined as
\begin{equation}
e(P) \triangleq 
\begin{cases}
\Phi_{P,3}(T_s(P)) & T_s(P) \neq +\infty \\
0 & T_s(P) = +\infty
\end{cases}
\end{equation}
for all $P \in \mathbb{P}$.

\section{The Bouc-Wen-Maxwell Collision Law}\label{sec:MCL}

The BWHCL can also be augmented further to yield a Maxwell-type collision law:
\begin{equation}\label{eq:BWMCL}
\begin{cases}
\dot{r} = \alpha \frac{k}{c} \abs{y}^{p - 1} y + \alpha_c \frac{k}{c} \abs{z}^{p - 1} z\\
\dot{y} = - \dot{r} + u\\
\dot{z} = A \dot{y} - \beta \abs{z}^{n - 1} z \abs{\dot{y}} - \gamma \abs{z}^n \dot{y} \\
F = - c \dot{r} = - \alpha k \abs{y}^{p - 1} y - \alpha_c k \abs{z}^{p - 1} z
\end{cases}
\end{equation}
Here, $r, y, z \in \mathbb{R}$ are internal state variables, $u \in \mathbb{R}$ is an input variable that is meant to represent the relative velocity of the colliding bodies, $F \in \mathbb{R}$ is an output variable that is meant to represent the contact force between the colliding bodies, and $c \in \mathbb{R}_{> 0}$ is an additional parameter. This collision law shall be referred to as the Bouc-Wen-Maxwell Collision Law (BWMCL). A diagrammatic representation of the lumped element model upon which the BWMCL is based is shown in Fig. \ref{fig:BWMCL_LEM}.

\begin{figure}
\centering
\begin{tikzpicture}[>={Stealth[scale=0.6]}]

    \draw[fill=black](0, 0) circle (2 pt);

    \draw[line width=1](0, 0) -- (1, 0);
    \draw[line width=1](1, -0.25) -- (1, 0.25);
    \draw[line width=1](0.75, 0.25) -- (1.25, 0.25);
    \draw[line width=1](0.75, -0.25) -- (1.25, -0.25);
    \draw[line width=1](1.25, -0.25) -- (1.25, 0.25);
    \draw[line width=1](1.25, 0) -- (2, 0);
    
    \draw[fill=black](2, 0) circle (2 pt);
    
    \draw[line width=1](2, 0) -- (2.25, 0);  
	\draw[draw=black, line width=1] (2.25, -0.5) rectangle node{BWHCL} (3.75, 0.5);		
    \draw[line width=1](3.75, 0) -- (4, 0); 
    
    \draw[fill=black](4, 0) circle (2 pt);
    
    \draw[line width=1, dashed](0, 0) -- (0, -1.75);
    \draw[line width=1, dashed](2, 0) -- (2, -1.25);
    \draw[line width=1, dashed](4, 0) -- (4, -1.75);
    
    \draw[<->, line width=1](0, -1) -- node[above] {$r$} (2, -1);
    \draw[<->, line width=1](2, -1) -- node[above] {$y$} (4, -1);
    
    \draw[<->, line width=1](0, -1.5) -- node[above] {$x$} (4, -1.5);    

\end{tikzpicture}
\caption{BWMCL: diagrammatic representation of the lumped element model (the left part of the diagram depicts the linear viscous energy dissipation element, the right part of the diagram depicts the BWHCL element)}\label{fig:BWMCL_LEM}
\end{figure}

After the introduction of an additional state variable $w \triangleq \dot{y}$, a feedback interconnection of the abstract collision model given by Eq. \eqref{eq:main} and the BWMCL results in the following model:
\begin{equation}\label{eq:BWMCM}
\begin{cases}
\dot{r} = \alpha \frac{k}{c} \abs{y}^{p - 1} y + \alpha_c \frac{k}{c} \abs{z}^{p - 1} z\\
\dot{y} = w \\
\dot{z} = A w - \beta \abs{z}^{n - 1} z \abs{w} - \gamma \abs{z}^n w \\
\dot{w} = - \frac{c}{m} \dot{r} - \alpha  p \frac{k}{c} \abs{y}^{p - 1} \dot{y} - \alpha_c p \frac{k}{c} \abs{z}^{p - 1} \dot{z} \\
\begin{matrix} r(0) = y(0) = z(0) = 0, & w(0) = -v_0 \end{matrix}
\end{cases}
\end{equation}
The relative position $x$ and the relative velocity $v$ can be recovered by augmenting the model with the output function given by
\begin{equation}
\begin{cases}\label{eq:BWMCM_output}
x = r + y\\
v = \dot{r} + \dot{y}
\end{cases}
\end{equation}
The model given by Eq. \eqref{eq:BWMCM} and Eq. \eqref{eq:BWMCM_output} shall be referred to as the Bouc-Wen-Maxwell Collision Model (BWMCM).\footnote{Sometimes, instead of using an output function, it may be more convenient to augment the BWMCM with the additional states $x$ and $v$, the equations $\dot{x} = v$ and $\dot{v} = -(c/m) \dot{r}$, and the initial conditions $x(0) = 0$ and $v(0) = -v_0$.}

The BWMCM will now be nondimensionalized.\footnote{See \cite{logan_applied_2013} for a description of the methodology that was employed for the nondimensionalization of the BWMCM.} The relevant fundamental dimensions are mass $M$, length $L$, and time $T$. The dimensions of the state variables are given by $[r] = L$, $[y] = L$, $[z] = L$, and $[w] = L T^{-1}$. The dimensions of the output variables are $[x] = L$ and $[v] = L T^{-1}$. The parameters $A$, $\alpha$, $n$, and $p$ are dimensionless. The dimensions of the remaining parameters are given by $[m] = M$, $[k] = M L^{1 - p} T^{-2}$, $[c] = M T^{-1}$, $[\beta] = L^{-n}$, $[\gamma] = L^{-n}$, and $[v_0] = L T^{-1}$. The nondimensionalized model will evolve with respect to the nondimensionalized time variable $T \triangleq t/T_c$, with the time scale $T_c \in \mathbb{R}_{> 0}$ given by
\begin{equation}\label{eq:BWMCM_T_c}
T_c  \triangleq \left(\frac{1}{\alpha + \alpha_c A^p}\right)^{\frac{1}{p + 1}} \left(\frac{m}{k}\right)^{\frac{1}{p + 1}} v_0^{-\frac{p - 1}{p + 1}}
\end{equation}
The nondimensionalized state variables are given by $R \triangleq r/X_c$, $Y \triangleq y/X_c$, $Z \triangleq z/Z_c$, $W \triangleq w/(X_c/T_c)$, and the nondimensionalized output variables are given by $X \triangleq x/X_c$ and $V \triangleq v/(X_c/T_c)$, with the spatial scales $X_c, Z_c \in \mathbb{R}_{>0}$ given by
\begin{equation}\label{eq:BWMCM_X_c}
X_c \triangleq \left(\frac{1}{\alpha + \alpha_c A^p}\right)^{\frac{1}{p+1}} \left(\frac{m}{k}\right)^{\frac{1}{p + 1}} v_0^{\frac{2}{p+1}}
\end{equation}
\begin{equation}\label{eq:BWMCM_Z_c}
Z_c \triangleq \left(\frac{1}{\alpha + \alpha_c A^p}\right)^{\frac{1}{p+1}} A \left(\frac{m}{k}\right)^{\frac{1}{p + 1}} v_0^{\frac{2}{p + 1}}
\end{equation}
respectively. Introduction of the dimensionless parameters 
\begin{equation}
B \triangleq \left(\frac{A^{p+1}}{\alpha + \alpha_c A^p}\right)^{\frac{n}{p+1}} \frac{\beta}{A} \left(\frac{m}{k}\right)^{\frac{n}{p + 1}} v_0^{\frac{2 n}{p + 1}}  
\end{equation}
\begin{equation}
\mathit{\Gamma} \triangleq \left(\frac{A^{p + 1}}{\alpha + \alpha_c A^p}\right)^{\frac{n}{p + 1}} \frac{\gamma}{A} \left(\frac{m}{k}\right)^{\frac{n}{p + 1}} v_0^{\frac{2 n}{p + 1}}  
\end{equation} 
\begin{equation}
\kappa \triangleq \frac{\alpha }{\alpha + \alpha_c A^p}
\end{equation}
\begin{equation}
\sigma \triangleq \left(\alpha + \alpha_c A^p\right)^{\frac{1}{p + 1}} \frac{1}{c} \left( m^p k \right)^{\frac{1}{p+1}} v_0^{\frac{p - 1}{p + 1}}
\end{equation}
and the abbreviation $\kappa_c \triangleq 1 - \kappa$, and nondimensionalization of Eq. \eqref{eq:BWMCM} and Eq. \eqref{eq:BWMCM_output} results in the model given by
\begin{equation}\label{eq:BWMCM_nd}
\begin{cases}
\dot{R} = \kappa \sigma \abs{Y}^{p - 1} Y + \kappa_c \sigma \abs{Z}^{p - 1} Z\\
\dot{Y} = W \\
\dot{Z} = W - B \abs{Z}^{n - 1} Z \abs{W} - \mathit{\Gamma} \abs{Z}^n W \\
\dot{W} = - \frac{1}{\sigma} \dot{R} - \kappa p \sigma \abs{Y}^{p - 1} \dot{Y} - \kappa_c p \sigma \abs{Z}^{p - 1} \dot{Z} \\
\begin{matrix} R(0) = Y(0) = Z(0) = 0, & W(0) = -1 \end{matrix}
\end{cases}
\end{equation}
and
\begin{equation}\label{eq:BWMCM_nd_output}
\begin{cases}
X = R + Y\\
V = \dot{R} + \dot{Y}
\end{cases}
\end{equation}
This form of the collision model shall be referred to as the Nondimensionalized Bouc-Wen-Maxwell Collision Model (NDBWMCM).\footnote{Sometimes, instead of using an output function, it may be more convenient to augment the NDBWMCM with the additional states $X$ and $V$, the equations $\dot{X} = V$ and $\dot{V} = -(1/\sigma) \dot{R}$, and the initial conditions $X(0) = 0$ and $V(0) = -1$.}

Under the assumption that the NDBWMCM is parameterized by $B \in \mathbb{R}_{> 0}$, $\mathit{\Gamma} \in (-B, B)$, $\kappa \in (0, 1)$, $\sigma \in \mathbb{R}_{>0}$, $n \in \mathbb{R}_{\geq 1}$, $p \in \mathbb{R}_{\geq 2} \cup \{ 1 \}$, there exists a unique bounded solution of the NDBWMCM on any time interval $[0, T_e)$ with $T_e \in \mathbb{R}_{>0} \cup \{ +\infty \}$. Moreover, the output associated with this solution is bounded. The set of equilibrium points of the NDBWMCM is
\begin{equation}
\mathcal{E} \triangleq \left\{ \left( R, Y, - \left( \frac{\kappa}{\kappa_c}\right)^{\frac{1}{p}} Y, 0 \right) : R, Y  \in \mathbb{R} \right\}
\end{equation}
Each solution of the NDBWMCM converges to a subset of $\mathcal{E}$ at a finite distance from the origin. See Appendix \ref{sec:analysis_BWMCM} for further details.\footnote{As previously (see Section \ref{sec:KVCL}), the conditions that are imposed on the parameters are sufficient, not necessary.} 

The following definitions are similar to the ones provided in Section \ref{sec:KVCL} for the NDBWSHCCM. $\mathcal{P} : \mathbb{P}^{*} \times \mathbb{R}_{>0} \longrightarrow \mathbb{P}$ shall map $(B_b, \mathit{\Gamma}_b, \kappa, \sigma_b, n, p) \in \mathbb{P}^{*}$ and $v_0 \in \mathbb{R}_{> 0}$ to 
\[
\left( B_b v_0^{\frac{2 n}{p + 1}}, \mathit{\Gamma}_b v_0^{\frac{2 n}{p + 1}}, \kappa, \sigma_b v_0^{\frac{p - 1}{p + 1}}, n, p \right) \in \mathbb{P}
\] 
where $\mathbb{P}^{*} = \mathbb{P} \subseteq \mathbb{R}^6$ consist of all $P = (B, \mathit{\Gamma}, \kappa, \sigma, n, p)$ such that $B \in \mathbb{R}_{> 0}$, $\mathit{\Gamma} \in (-B, B)$, $\kappa \in (0, 1)$, $\sigma \in \mathbb{R}_{>0}$, $n \in \mathbb{R}_{\geq 1}$, $p \in \mathbb{R}_{\geq 2} \cup \{ 1 \}$. The elements of $\mathbb{P}^{*}$ shall be referred to as the base parameters of the NDBWMCM. As previously, they are merely convenient abstractions for the study of the behavior of a given physical system represented by the NDBWMCM with respect to the changes in the initial relative velocity (e.g., see Section \ref{sec:ID}).

The function $\Phi : \mathbb{P} \times \mathbb{R}_{\geq 0} \longrightarrow \mathbb{R}^4$ is such that $\Phi_P(T)$ represents the value of the solution of the NDBWMCM parameterized by $P \in \mathbb{P}$ at the time $T \in \mathbb{R}_{\geq 0}$. The function $\Psi : \mathbb{P} \times \mathbb{R}_{\geq 0}\longrightarrow \mathbb{R}^2$ is such that $\Psi_P(T)$ is the value of the output of the NDBWMCM parameterized by $P \in \mathbb{P}$ at the time $T \in \mathbb{R}_{\geq 0}$. The function $F : \mathbb{P} \times \mathbb{R}^4 \longrightarrow \mathbb{R}$ that represents the contact force will be defined as 
\begin{equation}
F_P(R, Y, Z, W) \triangleq - \kappa \abs{Y}^{p - 1} Y - \kappa_c \abs{Z}^{p - 1} Z
\end{equation}
for all $(R, Y, Z, W) \in \mathbb{R}^4$ and $P \in \mathbb{P}$ such that $\kappa = P_3$ and $p = P_6$. With reference to Eq. \eqref{eq:main_t_s}, the time of the separation $T_s : \mathbb{P} \longrightarrow \mathbb{R}_{> 0} \cup \{ +\infty \}$ is given by 
\begin{equation}
T_s (P) \triangleq \inf \{ T \in \mathbb{R}_{\geq 0} : F_P(\Phi_P(T)) \leq 0 \wedge 0 \leq \Psi_{P,2}(T) \}
\end{equation}
for all $P \in \mathbb{P}$. With reference to Eq. \eqref{eq:main_cor}, CoR $e : \mathbb{P} \longrightarrow \mathbb{R}$ is given by  
\begin{equation}
e(P) \triangleq 
\begin{cases}
\Psi_{P,2}(T_s(P)) & T_s(P) \neq +\infty \\
0 & T_s(P) = +\infty
\end{cases}
\end{equation}
for all $P \in \mathbb{P}$.

\section{Parameter Identification}\label{sec:ID}

\begin{figure*}[t]
\begin{subfigure}[t]{0.5\textwidth}
\centering{
    \begin{tikzpicture}
	\definecolor{clr1}{RGB}{150,150,150}
	\definecolor{clr2}{RGB}{0,0,0}
	\begin{axis}[xlabel={$v_0 \: (\text{m} \: \text{s}^{-1})$}, ylabel={$e$}, ymin = 0.5, ymax = 1, xticklabel style={/pgf/number format/precision=2, /pgf/number format/fixed}, legend style={font=\small}, line width=1pt]
	\addplot [clr1, mark=o, mark options={scale=1.5}, only marks] table [x=v_0, y=e, col sep=comma] {data/DSS.csv};
	\addlegendentry{DSS: experiment};
	\addplot [clr1, mark=*, mark options={scale=0.75}, only marks] table [x=v_0, y=e, col sep=comma] {data/DSS_NDBWSHCCM.csv};
	\addlegendentry{DSS: NDBWSHCCM};	
	\addplot [clr2, mark=o, mark options={scale=1.5}, only marks] table [x=v_0, y=e, col sep=comma] {data/DSA.csv};
	\addlegendentry{DSA: experiment};
	\addplot [clr2, mark=*, mark options={scale=0.75}, only marks] table [x=v_0, y=e, col sep=comma] {data/DSA_NDBWSHCCM.csv};
	\addlegendentry{DSA: NDBWSHCCM};
	\end{axis}
	\end{tikzpicture}
}
\subcaption{CoR: NDBWSHCCM vs. experiment}\label{fig:CoR_NDBWSHCCM}
\end{subfigure}
\begin{subfigure}[t]{0.5\textwidth} 
\centering{
    \begin{tikzpicture}
	\definecolor{clr1}{RGB}{150,150,150}
	\definecolor{clr2}{RGB}{0,0,0}    
	\begin{axis}[xlabel={$v_0 \: (\text{m} \: \text{s}^{-1})$}, ylabel={$e$}, ymin = 0.5, ymax = 1, xticklabel style={/pgf/number format/precision=2, /pgf/number format/fixed}, legend style={font=\small}, line width=1pt]
	\addplot [clr1, mark=o, mark options={scale=1.5}, only marks] table [x=v_0, y=e, col sep=comma] {data/DSS.csv};
	\addlegendentry{DSS: experiment};
	\addplot [clr1, mark=*, mark options={scale=0.75}, only marks] table [x=v_0, y=e, col sep=comma] {data/DSS_NDBWMCM.csv};
	\addlegendentry{DSS: NDBWMCM};		
	\addplot [clr2, mark=o, mark options={scale=1.5}, only marks] table [x=v_0, y=e, col sep=comma] {data/DSA.csv};
	\addlegendentry{DSA: experiment};
	\addplot [clr2, mark=*, mark options={scale=0.75}, only marks] table [x=v_0, y=e, col sep=comma] {data/DSA_NDBWMCM.csv};
	\addlegendentry{DSA: NDBWMCM};
	\end{axis}
	\end{tikzpicture}
\subcaption{CoR: NDBWMCM vs. experiment}\label{fig:CoR_NDBWMCM}
}\end{subfigure}
\caption{CoR: models vs. experiment}\label{fig:CoR}
\end{figure*}

A common approach for the parameterization of collision models is identification of model parameters based on experimental CoR data. More specifically, for a given physical system, CoRs are measured for a range of initial relative velocities of the colliding bodies. Then, the model parameters are selected in a manner such that the values of the experimentally obtained CoRs and the values of the CoRs associated with the model are sufficiently close to each other in a certain predefined sense. In what follows, this methodology is applied to the identification of the parameters of the NDBWSHCCM and the NDBWMCM. 

Suppose that the experimental data are provided in the form of a finite sequence of measured relative velocities of the bodies at the time of the collision $\tilde{v}_0 \in \mathbb{R}_{>0}^M$ and a finite sequence of the corresponding measured CoRs $\tilde{e} \in [0, 1]^M$ with $M \in \mathbb{Z}_{\geq 1}$. Then, the quality of the base parameterization $P^{*} \in \mathbb{P}^{*}$ of the NDBWSHCCM or the NDBWMCM can be assessed by the cost function $J : \mathbb{R}_{>0}^M \times [0, 1]^M \times \mathbb{P}^{*} \longrightarrow \mathbb{R}_{\geq 0}$ given by 
\begin{equation}\label{eq:J}
J (\tilde{v}_0, \tilde{e}, P^{*}) \triangleq \sqrt{\sum_{i = 1}^{i = M} (\tilde{e}_i - e (\mathcal{P}(P^{*}, \tilde{v}_{0,i})))^2}
\end{equation}
The cost function can be used in conjunction with global optimization routines to infer the model parameters from experimental data automatically or in conjunction with local optimization routines to refine the model parameters from an initial guess.

The remainder of this section describes an application example based on the experimental data sets provided in Fig. 1 in \cite{kharaz_study_2000}:
\begin{compactitem}
\item ``dataset steel'' (DSS): CoR vs. initial relative velocity for the normal impact of a 5mm diameter aluminum oxide sphere on a thick EN9 steel plate.
\item ``dataset aluminum'' (DSA): CoR vs. initial relative velocity for the normal impact of a 5mm diameter aluminum oxide sphere on a thick aluminum alloy plate.
\end{compactitem}
The data were extracted from \cite{kharaz_study_2000} using the image processing software WebPlotDigitizer \cite{rohatgi_webplotdigitizer_nodate}.

The numerical simulation and the data analysis that are described in this section were performed using Python 3.11.0, NumPy 1.24.2 \cite{harris_array_2020}, and SciPy 1.14.0 \cite{virtanen_scipy_2020}, and relied on the IEEE-754 floating point arithmetic (with the default rounding mode) for the quantization of real numbers \cite{ieee_ieee_2019}. The code is available from the personal repository of the corresponding author.\footnote{\url{https://gitlab.com/user9716869/BWBCL}}

All numerical simulations were performed using the explicit Runge-Kutta method of order 8 \cite{dormand_family_1980, prince_high_1981, hairer_solving_1993} available via the interface of the function \texttt{integrate.solve\_ivp} from the library SciPy 1.14.0 \cite{virtanen_scipy_2020}. All settings of \texttt{integrate.solve\_ivp} were left at their default values, with the exception of the maximum time step (\texttt{max\_step}), the relative tolerance (\texttt{rtol}), and the absolute tolerance (\texttt{atol}). The maximum time step was set to $\approx 10^{-2}$, the relative tolerance was set to $\approx 10^{-10}$, and the absolute tolerance was set to $\approx 10^{-12}$.

The parameter identification was performed using an implementation of the Nelder-Mead algorithm \cite{nelder_simplex_1965} available via the interface of the SciPy function \texttt{optimize.minimize}. The details of the parameter identification process were deemed unimportant and will not be described in the article. The approximations of the values of the identified parameters and the associated values of the cost function are shown in Table \ref{tab:param}. Figure \ref{fig:CoR_NDBWSHCCM} shows the plots of CoR against the initial relative velocity obtained experimentally and from the results of the numerical simulations of the NDBWSHCCM. Figure \ref{fig:CoR_NDBWMCM} shows the plots of CoR against the initial relative velocity obtained experimentally and from the results of the numerical simulations of the NDBWMCM. 

\begin{table}[t]
\caption{DSS and DSA: parameter identification\label{tab:param}}
\centering{
\begin{tabular}{c c c c c}
\toprule
& \multicolumn{2}{c}{NDBWSHCCM} & \multicolumn{2}{c}{NDBWMCM}\\
\midrule
$P^{*}$ and $J$ & DSS & DSA & DSS & DSA \\
\midrule
$B_b$             & $1.43$    & $0.63$    & $0.655$  & $0.44$ \\
$\mathit{\Gamma}_b$ & $-1.42$   & $-0.611$  & $-0.64$  & $-0.418$ \\
$\kappa$          & $0.632$   & $0.188$   & $0.519$  & $0.113$ \\
$\sigma_b$          & $0.00715$ & $0.00594$ & $0.0118$ & $0.00785$ \\
$n$               & $1.31$    & $1$       & $1.94$   & $1.27$ \\
$p$               & $1.27$    & $2.02$    & $2.28$   & $3.14$ \\
\midrule
$J (\tilde{v}_0, \tilde{e}, P^{*})$ & $0.0357$ & $0.011$ & $0.0406$ & $0.0114$ \\
\bottomrule
\end{tabular}
}
\end{table}

\begin{figure}[t]
\centering
\begin{tikzpicture}
\definecolor{clr1}{RGB}{200,200,200}
\definecolor{clr2}{RGB}{150,150,150}
\definecolor{clr3}{RGB}{100,100,100}
\definecolor{clr4}{RGB}{0,0,0}
\begin{axis}[xlabel={$\abs{x} \: (\text{mm})$}, ylabel={$F \: (\text{N})$}, ymin = 0, xticklabel style={/pgf/number format/precision=2, /pgf/number format/fixed}, legend pos=north west, legend style={font=\small}, line width=1pt]
\addplot [clr1, mark=none, dashed] table [x=x, y=F, col sep=comma] {data/RCross_2_15.csv};
\addlegendentry{$v_0 = 2.15 \: (\text{m} \: \text{s}^{-1})$};
\addplot [clr2, mark=none, dashed] table [x=x, y=F, col sep=comma] {data/RCross_3_03.csv};
\addlegendentry{$v_0 = 3.03 \: (\text{m} \: \text{s}^{-1})$};
\addplot [clr3, mark=none, dashed] table [x=x, y=F, col sep=comma] {data/RCross_4_18.csv};
\addlegendentry{$v_0 = 4.18 \: (\text{m} \: \text{s}^{-1})$};
\addplot [clr4, mark=none, dashed] table [x=x, y=F, col sep=comma] {data/RCross_5_02.csv};
\addlegendentry{$v_0 = 5.02 \: (\text{m} \: \text{s}^{-1})$};
\addplot [clr1, mark=none, mark options={scale=0.75}] table [x expr=1000*\thisrow{x}, y=F, col sep=comma] {data/RCross_sim_2_15.csv};
\addlegendentry{$v_0 = 2.15 \: (\text{m} \: \text{s}^{-1})$};
\addplot [clr2, mark=none, mark options={scale=0.75}] table [x expr=1000*\thisrow{x}, y=F, col sep=comma] {data/RCross_sim_3_03.csv};
\addlegendentry{$v_0 = 3.03 \: (\text{m} \: \text{s}^{-1})$};
\addplot [clr3, mark=none, mark options={scale=0.75}] table [x expr=1000*\thisrow{x}, y=F, col sep=comma] {data/RCross_sim_4_18.csv};
\addlegendentry{$v_0 = 4.18 \: (\text{m} \: \text{s}^{-1})$};
\addplot [clr4, mark=none, mark options={scale=0.75}] table [x expr=1000*\thisrow{x}, y=F, col sep=comma] {data/RCross_sim_5_02.csv};
\addlegendentry{$v_0 = 5.02 \: (\text{m} \: \text{s}^{-1})$};
\end{axis}
\end{tikzpicture}
\caption{Normal impact of a baseball on a flat surface: experimentally obtained hysteresis loops (dashed lines) vs. hysteresis loops obtained from the numerical simulations of the BWSHCCM (solid lines) \label{fig:RCross}}
\end{figure}

\begin{table}[t]
\caption{Normal impact of a baseball on a flat surface: parameterization of the BWSHCCM\label{tab:RCross}}
\centering{
\begin{tabular}{c c c c c}
\toprule
Parameter & Value & Unit \\
\midrule
$m$ & $0.146$ & $\text{kg}$ \\
$k$ & $117080063$ & $\text{kg} \: \text{m}^{1 - p} \: \text{s}^{-2}$ \\
$c$ & $5854003$ & $\text{kg} \: \text{m}^{-p} \: \text{s}^{-1}$ \\
$n$ & $1.1$ & $\text{-}$ \\
$p$ & $1.7$ & $\text{-}$ \\
$\alpha$ & $0.1$ & $\text{-}$ \\
$\beta$ & $981.05$ & $\text{m}^{-n}$ \\
$\gamma$ & $-961.4$ & $\text{m}^{-n}$ \\
$A$ & $0.925$ & $\text{-}$ \\
\bottomrule
\end{tabular}
}
\end{table}

For each data set, the results indicate good agreement between the CoR data obtained experimentally and the CoR data obtained from the results of the numerical simulations of the models across a wide band (low to moderate) of the initial relative velocities. It is important to note that only a single base parameter vector $P^{*} \in \mathbb{P}^{*}$ was employed for each data set. Therefore, only a single vector of physical parameters $(m, k, c, n, p, \alpha, \beta, \gamma, A)$ is needed to achieve a good agreement between the CoR data obtained from the experiments and the CoR data stemming from the numerical simulations of the models. 

It should be noted that a preliminary informal parameter sensitivity analysis that was performed by the authors suggests that the parameter identification based on the CoR data alone may not provide a unique vector of parameters (at least in a statistical sense) for either of the nondimensionalized models: multiple statistically indistinguishable solutions may be possible. Therefore, the cost function given by Eq. \eqref{eq:J} may be augmented to penalize further optimization criteria (e.g., duration of the contact), if such data are available. Alternatively, the problem can be reformulated as a multi-objective optimization problem. 

It is also possible to identify the parameters of the models based on the experimentally obtained time domain data or the associated hysteresis loops. Figure \ref{fig:RCross} shows the plots of the experimentally obtained hysteresis loops observed during the normal impact of a baseball on a flat surface across a range of initial relative velocities. The experimental data were provided by Professor Rodney Cross: the data originally appeared in Fig. 9.5 in \cite{cross_physics_2011} (see also \cite{cross_impact_2014}). The same figure shows the hysteresis loops obtained based on the results of the numerical simulations of the BWSHCCM with the parameters shown in Table \ref{tab:RCross}. The settings for the numerical simulations were identical to the settings used in the identification study based on the CoR data, with the exception of the maximum time step, which was set to $\approx T_c/100 \: \text{s}$. The plots demonstrate a good agreement between the experimentally obtained hysteresis loops and the hysteresis loops obtained from the simulation of the BWSHCCM. The parameter identification study was based on an informal procedure. As previously, its details were deemed unimportant and will not be described in the article.

\section{Conclusions and Future Work}\label{sec:conclusions}

The article showcased an analytical and numerical study of two mathematical models of binary direct collinear collisions of convex viscoplastic bodies. The mathematical models of the collision process employed two distinct incremental collision laws based on the Bouc-Wen differential model of hysteresis. It was demonstrated that the models possess favorable analytical properties (e.g., global existence, uniqueness, and boundedness of the solutions) under mild restrictions on the values of the model parameters. Two model parameter identification strategies were proposed and tested using experimental collision data available in the research literature. Based on the results of the identification studies it was concluded that one set of model parameters independent of the initial relative velocity is sufficient for attainment of a good correlation between the results of the numerical simulations of the models under consideration in this study and the experimental data. Therefore, it can be concluded that the models accurately describe the physics of a wide variety of contact and collision phenomena.

Possible future directions may include:
\begin{compactitem}
\item Extensions of the collision laws to account for the details of the underlying physical phenomenon (e.g., introduction of the distinct elastic loading and elastic-plastic loading stages).
\item Investigation of the BWSHCCL and BWMCL in the context of multiple simultaneous collisions (e.g., see \cite{brogliato_nonsmooth_2016, stronge_impact_2018}).
\item Comparative analysis of alternative differential models of hysteresis in the context of impact dynamics.
\item Investigation of planar and three-dimensional collisions of viscoplastic bodies with rough surfaces based on the Bouc-Wen model of hysteresis.
\item Experimental studies that could help to understand the limitations of the BWSHCCL and the BWMCL.
\item Construction of analytical approximations of the solutions of the IVPs associated with the BWSHCCM and the BWMCM.
\item Applications of the BWSHCCL and the BWMCL to problems of practical significance and comparative analysis of the BWSHCCL and the
BWMCL with other collision laws in the context of applications.
\item A study of the applicability of the BWSHCCL and the BWMCL to impacts with elastic aftereffect (e.g., see \cite{wang_advanced_2017}).
\item Development of a methodology for parameterization of the BWSHCCL and the BWMCL from first principles, based on the properties of the materials and the geometry of the colliding bodies (without relying on the model parameter identification).
\end{compactitem}

\section*{Acknowledgment} %% ASME requests this exact spelling, singular.

The authors would like to acknowledge their families, colleagues, and friends. Special thanks go to two anonymous reviewers. Multiple significant improvements were introduced to the original draft of the article based on their feedback. Special thanks also go to Professor Rodney Cross for providing experimental data from \cite{cross_physics_2011}. Special thanks also go to the members of staff of Auburn University Libraries for their assistance in finding rare and out-of-print research articles and research monographs. The authors would also like to acknowledge the professional online communities, instructional websites, and various online service providers, especially \url{https://www.adobe.com/acrobat/online/pdf-to-word.html}, \url{https://automeris.io}, \url{https://capitalizemytitle.com}, \url{https://www.matweb.com}, \url{https://www.overleaf.com}, \url{https://pgfplots.net}, \url{https://scholar.google.com}, \url{https://stackexchange.com}, \url{https://stringtranslate.com}, \url{https://www.wikipedia.org}. We also note that the results of some of the calculations that are presented in this article were performed with the assistance of the software Wolfram Mathematica \cite{wolfram_research_inc_mathematica_2023}. Furthermore, the software MATLAB R2023a \cite{mathworks_matlab_2023} was used extensively for numerical experiments. Other software that was used to produce this article included Adobe Acrobat Reader, Adobe Digital Editions, ChatGPT (ChatGPT was used only to review the article and the associated code during the final stages of the preparation of the manuscript in February 2025), DiffMerge, Git, GitLab, Google Chrome, Grammarly (the use of Grammarly was restricted to the identification and correction of spelling, grammar, and punctuation errors), Jupyter Notebook, LibreOffice, macOS Monterey, Mamba, Microsoft Outlook, Preview, Safari, TeX Live/MacTeX, Texmaker, and Zotero.

\section*{Funding Data}

The present work did not receive any specific funding. However, the researchers receive financial support from Auburn University for their overall research activity.

\appendix

\section{Notation and Conventions}\label{sec:notation}

Essentially all of the definitions and results that are employed in this article are standard in the fields of set theory, general topology, analysis, ordinary differential equations, and nonlinear systems/control. They can be found in a number of textbooks and monographs on these subjects (e.g., see \cite{takeuti_introduction_1982}, \cite{kelley_general_1955, morris_topology_2020, baldwin_math_2024}, \cite{bloch_real_2010, shurman_calculus_2016, ziemer_modern_2017}, \cite{chicone_ordinary_1999, schaeffer_ordinary_2016}, \cite{lasalle_extensions_1960, yoshizawa_stability_1966, yoshizawa_stability_1975, lakshmikantham_practical_1990, isidori_nonlinear_1995, sontag_mathematical_1998, isidori_nonlinear_1999, sastry_nonlinear_1999, marquez_nonlinear_2003, haddad_nonlinear_2011, khalil_nonlinear_2015, goebel_set-valued_2024}, respectively). 

\begin{definition}
$\in$ denotes the set membership relation, $\subseteq$ denotes the subset relation, $\subset$ denotes the proper subset relation, $\cup$ denotes the binary set union operation, $\cap$ denotes the binary set intersection operation, $\setminus$ denotes the binary set difference operation, $\mathcal{P}$ denotes the power set operation, $\emptyset$ denotes the empty set, $(a_1, \ldots, a_n)$ denotes an $n$-tuple, $\{ a_1, \ldots, a_n \}$ denotes an unordered collection of elements.\footnote{It should be noted that some of the syntactic constructions may carry different semantics depending on the context. For example, $(a, b)$ may be used as a pair or as an interval. It is hoped that the context of the discussion will always make the meaning of a given syntactic construction apparent.}
\end{definition}

\begin{definition}
By convention, a topological space cannot be empty. Suppose $X \neq \emptyset$ and $\tau \subseteq \mathcal{P} X$ is a topology on $X$. $\mathsf{cl} A$ denotes the closure of $A \subseteq X$; if $Y \subseteq X$ and $Y \neq \emptyset$, then $\tau | Y$ will denote the subspace topology of $\tau$ on $Y$; the sets $A \subseteq X$ and $B \subseteq X$ are separated if and only if $\mathsf{cl} A \cap B = A \cap \mathsf{cl} B = \emptyset$; a set $C \subseteq X$ is clopen if and only if it is open and closed; $A \subseteq X$ is connected if and only if it is not a union of two nonempty separated sets; $(X, \tau)$ is a connected topological space if and only if $X$ is a connected set.  
\end{definition}

It should be noted that different definitions of a connected set and a connected topological space are employed in some of the cited literature. The following technical lemmas establish a connection between the two commonly used definitions (these results are not used directly, and the proofs were deemed to be sufficiently simple to be omitted):
\begin{lem}
Suppose $(X, \tau)$ is a topological space. Then, $(X, \tau)$ is connected if and only if the only clopen sets in $(X, \tau)$ are $\emptyset$ and $X$.
\end{lem}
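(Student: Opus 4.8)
The plan is to prove both directions by contraposition, converting a nontrivial clopen set into a separation of $X$ and, conversely, a separation of $X$ into such a clopen set. Throughout I would lean on the elementary remark that whenever $A \cap \mathsf{cl}\, B = \emptyset$ one also has $A \cap B = \emptyset$, since $B \subseteq \mathsf{cl}\, B$; this is what links the paper's notion of separated sets to ordinary disjointness.

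For the forward direction I would assume that $(X,\tau)$ admits a clopen set $C$ with $\emptyset \neq C \neq X$ and put $C' \triangleq X \setminus C$, which is then clopen and nonempty as well. The step to carry out is verifying that $C$ and $C'$ are separated: since $C$ is closed, $\mathsf{cl}\, C = C$, so $\mathsf{cl}\, C \cap C' = C \cap C' = \emptyset$, and symmetrically $C'$ closed gives $C \cap \mathsf{cl}\, C' = \emptyset$. Because $X = C \cup C'$ is then a union of two nonempty separated sets, $X$ fails to be connected, which is the contrapositive of the claim.

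For the reverse direction I would assume $X$ is not connected, so $X = A \cup B$ with $A, B$ nonempty and separated, i.e. $\mathsf{cl}\, A \cap B = A \cap \mathsf{cl}\, B = \emptyset$. The key step is to show $A$ is clopen: from $A \cap \mathsf{cl}\, B = \emptyset$ together with $\mathsf{cl}\, B \subseteq X = A \cup B$ one gets $\mathsf{cl}\, B \subseteq B$, so $B$ is closed and hence $A = X \setminus B$ is open; symmetrically $\mathsf{cl}\, A \cap B = \emptyset$ forces $\mathsf{cl}\, A \subseteq A$, so $A$ is closed. Then $A \neq \emptyset$ by hypothesis, and $A \neq X$ because $A = X$ would give $B \subseteq A$, whence $B \subseteq A \cap \mathsf{cl}\, B = \emptyset$, contradicting $B \neq \emptyset$. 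Thus $A$ is a clopen set different from $\emptyset$ and $X$.

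I do not expect a genuine obstacle; the only points requiring care are using the paper's precise definition of "separated" (closures taken on alternating sides) rather than plain disjointness, and the implicit use of the standing convention that $X \neq \emptyset$, which is what makes $\emptyset$ and $X$ genuinely two distinct clopen sets and keeps the statement non-vacuous.
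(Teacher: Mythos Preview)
Your argument is correct and is the standard proof of this equivalence; the paper itself omits the proof (``deemed to be sufficiently simple to be omitted''), so there is nothing to compare against, but your contrapositive in both directions with the clopen/complement decomposition is exactly what one would expect here.
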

\begin{lem}\label{thm:connected_set_alt_def}
Suppose $(X, \tau)$ is a topological space and $Y \subseteq X$. Then, $Y$ is a connected set in $(X, \tau)$ if and only if either $Y = \emptyset$ or $(Y, \tau | Y)$ is a connected topological space.
\end{lem}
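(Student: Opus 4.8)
The plan is to treat the degenerate case $Y = \emptyset$ separately and then, for $Y \neq \emptyset$, to reduce the whole statement to the single observation that whether two subsets of $Y$ are separated does not depend on whether one views them inside $(X, \tau)$ or inside $(Y, \tau | Y)$. If $Y = \emptyset$, the right-hand disjunct holds outright, and $\emptyset$ is a connected set because it cannot be written as a union of two \emph{nonempty} sets, let alone two nonempty separated ones; so both sides of the biconditional are true. Henceforth I assume $Y \neq \emptyset$, so that $\tau | Y$ is well defined, and I recall that $(Y, \tau | Y)$ being a connected topological space means precisely that $Y$ is a connected set in $(Y, \tau | Y)$. Thus it suffices to show that $Y$ is a connected set in $(X, \tau)$ if and only if $Y$ is a connected set in $(Y, \tau | Y)$.

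The key step is the closure identity $\mathsf{cl}_{\tau | Y} A = (\mathsf{cl}_{\tau} A) \cap Y$, valid for every $A \subseteq Y$ (where the subscripts record the space in which the closure is taken): a point $y \in Y$ lies in the relative closure if and only if every relatively open neighbourhood $U \cap Y$ of $y$ meets $A$, and since $A \subseteq Y$ this is equivalent to every $\tau$-open neighbourhood $U$ of $y$ meeting $A$. Consequently, for any $A, B \subseteq Y$,
\[
(\mathsf{cl}_{\tau | Y} A) \cap B = (\mathsf{cl}_{\tau} A) \cap Y \cap B = (\mathsf{cl}_{\tau} A) \cap B
\]
because $B \subseteq Y$, and symmetrically $A \cap (\mathsf{cl}_{\tau | Y} B) = A \cap (\mathsf{cl}_{\tau} B)$. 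Hence $A$ and $B$ are separated in $(Y, \tau | Y)$ exactly when they are separated in $(X, \tau)$.

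To finish I would argue by contraposition. By definition, $Y$ fails to be a connected set in $(X, \tau)$ if and only if $Y = A \cup B$ for some nonempty $A, B$ separated in $(X, \tau)$; any such $A$ and $B$ are necessarily subsets of $Y$, so by the previous paragraph this is equivalent to $Y = A \cup B$ for some nonempty $A, B$ separated in $(Y, \tau | Y)$, i.e.\ to $Y$ failing to be a connected set in $(Y, \tau | Y)$. Negating both sides gives the asserted equivalence. The argument is essentially routine; the only points that demand care are the empty-set bookkeeping --- ensuring $\tau | Y$ is never invoked when $Y = \emptyset$ --- and the justification of the closure identity, which is the one computation I would actually spell out in full.
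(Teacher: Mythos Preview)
Your argument is correct: the empty case is handled properly, the closure identity $\mathsf{cl}_{\tau|Y} A = (\mathsf{cl}_{\tau} A) \cap Y$ is the right tool, and it immediately yields that separatedness of subsets of $Y$ is independent of the ambient space, from which the equivalence follows by contraposition. The paper itself omits the proof entirely (``the proofs were deemed to be sufficiently simple to be omitted''), so there is no approach to compare against; what you have written is the standard textbook argument and would serve well as the supplied proof.
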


The following technical lemma will be employed in the proofs of several results that follow (the proof was deemed to be sufficiently simple to be omitted):
\begin{lem}\label{thm:con_sep}
Suppose $(X, \tau)$ is a topological space. Suppose that $A \subseteq X$ and $B \subseteq X$ are separated, $C \subseteq A \cup B$ is connected. Then, $C \subseteq A$ or $C \subseteq B$. 
\end{lem}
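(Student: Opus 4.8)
The plan is to decompose $C$ along the two separated sets and reduce to the very definition of connectedness. Concretely, put $C_A \triangleq C \cap A$ and $C_B \triangleq C \cap B$. Since $C \subseteq A \cup B$ by hypothesis, we have $C = C_A \cup C_B$, so the entire content of the proof is to verify that $C_A$ and $C_B$ are themselves separated; once that is done, the claim follows immediately from the fact that a connected set cannot be written as the union of two nonempty separated sets.

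To show $C_A$ and $C_B$ are separated, I would invoke monotonicity of the closure operator together with the hypothesis $\mathsf{cl}\, A \cap B = A \cap \mathsf{cl}\, B = \emptyset$. From $C_A \subseteq A$ we get $\mathsf{cl}\, C_A \subseteq \mathsf{cl}\, A$, and from $C_B \subseteq B$ it then follows that $\mathsf{cl}\, C_A \cap C_B \subseteq \mathsf{cl}\, A \cap B = \emptyset$. By the symmetric argument, $C_A \cap \mathsf{cl}\, C_B \subseteq A \cap \mathsf{cl}\, B = \emptyset$. Hence $C_A$ and $C_B$ are separated.

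Now $C$ is connected and $C = C_A \cup C_B$ with $C_A$ and $C_B$ separated; by the definition of a connected set adopted in the paper, this forces $C_A = \emptyset$ or $C_B = \emptyset$. In the former case $C = C_B \subseteq B$; in the latter case $C = C_A \subseteq A$. This is exactly the desired conclusion.

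I do not expect any genuine obstacle here: the only points requiring a little care are (i) making sure that ``separated'' is the property that descends to subsets — it does, precisely because closures only shrink under passing to subsets — and (ii) noting that the degenerate case $C = \emptyset$ is harmless, since then the conclusion holds vacuously and the argument above still goes through verbatim. So the ``hardest'' part is merely the bookkeeping of the two closure inclusions.
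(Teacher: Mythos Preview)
Your argument is correct and is the standard proof of this classical lemma. The paper actually omits the proof entirely, deeming it ``sufficiently simple to be omitted,'' so there is nothing to compare against; your write-up would serve perfectly well as the missing proof.
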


\begin{definition}
$\mathbb{Z}$ is the set of all integers; $\mathbb{R}$ is the set of all real numbers; an interval of real numbers $I \subseteq \mathbb{R}$ is non-degenerate if it has a non-empty interior; $\mathbb{K}_{>a} \triangleq (a, +\infty) \cap \mathbb{K}$, $\mathbb{K}_{<a} \triangleq (-\infty, a) \cap \mathbb{K}$, $\mathbb{K}_{\geq a} \triangleq [a, +\infty) \cap \mathbb{K}$, and $\mathbb{K}_{\leq a} \triangleq (-\infty, a] \cap \mathbb{K}$ for any $a \in \mathbb{R}$ with $\mathbb{K} \subseteq \mathbb{R}$; $\mathbb{R}^n$ with $n \in \mathbb{Z}_{\geq 1}$ is the set of $n$-tuples of real numbers (augmented with the structure of the Euclidean space); if $X = (x_1, \ldots, x_n) \in \mathbb{R}^n$ with $n \in \mathbb{Z}_{\geq 1}$, then $X_i \triangleq x_i$ for all $i \in \{ 1, \ldots, n \}$; $f : X \longrightarrow Y$ denotes a function with the domain $X$ and the codomain $Y$; given $f : X \longrightarrow Y$, $f(A)$ denotes the image of $f$ under the set $A$; if $f : X \longrightarrow \mathbb{R}^n$ with $n \in \mathbb{Z}_{\geq 1}$, then $f_i : X \longrightarrow \mathbb{R}$ is given by $f_i(x) \triangleq (f(x))_i$ for all $x \in X$ and $i \in \{ 1, \ldots, n \}$; unless stated otherwise, the topology of a subset of $\mathbb{R}^n$ with $n \in \mathbb{Z}_{\geq 1}$ is always the subspace topology of the standard topology on $\mathbb{R}^n$; given $A \subseteq \mathbb{R}$, $\inf A \in \mathbb{R} \cup \{ -\infty, +\infty \}$ denotes the infimum of $A$ and $\sup A \in \mathbb{R} \cup \{ -\infty, +\infty \}$ denotes the supremum of $A$; given a sequence $\{ x_i \in \mathbb{R}^n \}_{i \in \mathbb{Z}_{\geq 1}}$ with $n \in \mathbb{Z}_{\geq 1}$, $\lim_{i \rightarrow +\infty} x_i$ denotes the limit of $x$, provided that it exists; $\langle \cdot, \cdot \rangle : \mathbb{R}^n \times \mathbb{R}^n \longrightarrow \mathbb{R}$ with $n \in \mathbb{Z}_{\geq 1}$ is the canonical inner product on $\mathbb{R}^n$; $\lVert \cdot \rVert_2 : \mathbb{R}^n \longrightarrow \mathbb{R}_{\geq 0}$ with $n \in \mathbb{Z}_{\geq 1}$ is the Euclidean norm on $\mathbb{R}^n$; assuming that $n \in \mathbb{Z}_{\geq 1}$, $a \in \mathbb{R}^n$, and $r \in \mathbb{R}_{>0}$, $\mathbb{B}(a, r) \triangleq \{\ x \in \mathbb{R}^n : \lVert x - a \rVert_2 < r \}$ is an open ball in $\mathbb{R}^n$ centered at $a$ with the radius $r$; assuming that $n \in \mathbb{Z}_{\geq 1}$, $a \in \mathbb{R}^n$, and $r \in \mathbb{R}_{>0}$, $\bar{\mathbb{B}}(a, r) \triangleq \{\ x \in \mathbb{R}^n : \lVert x - a \rVert_2 \leq r \}$ is a closed ball in $\mathbb{R}^n$ centered at $a$ with the radius $r$; $C \subseteq \mathbb{R}^n$ is convex if and only if $(1 - \lambda) x + \lambda y \in C$ for all $x, y \in C$ and for all $\lambda \in [0, 1]$; $C \subseteq \mathbb{R}^n$ is strictly convex if and only if $(1 - \lambda) x + \lambda y$ belongs to the interior of $C$ for all $x, y \in C$ and for all $\lambda \in (0, 1)$; $f : \mathbb{R}^n \longrightarrow \mathbb{R}^n$ with $n \in \mathbb{Z}_{\geq 1}$ is locally Lipschitz if and only if for every $x \in \mathbb{R}^n$ there exists an open set $U \subseteq \mathbb{R}^n$ such that $x \in U$ and there exists $L \in \mathbb{R}_{>0}$ such that $\lVert f(y) - f(z) \rVert_2 \leq L \lVert y - z \rVert_2$ for all $y, z \in U$; given a differentiable function $f : X \longrightarrow Y$ such that $X \subseteq \mathbb{R}$ and $Y \subseteq \mathbb{R}^n$ with $n \in \mathbb{Z}_{\geq 1}$, $df/dx$ denotes the derivative of the function; the overdot notation $\dot{x} \triangleq (dx/dt)$ represents the derivative of the differentiable function $x$ with respect to the time variable (in the context of mechanics); given a continuously differentiable function $f : \mathbb{R}^n \longrightarrow \mathbb{R}$, $\nabla f : \mathbb{R}^n \longrightarrow \mathbb{R}^n$ represents the gradient of $f$.  
\end{definition}

\begin{definition}
Consider the following system of ordinary differential equations with an output
\begin{equation}\label{eq:sys}
\begin{cases}
\dot{x} = f(x)\\
y = g(x)
\end{cases}
\end{equation}
where $f : \mathbb{R}^n \longrightarrow \mathbb{R}^n$ with $n \in \mathbb{Z}_{\geq 1}$ is a locally Lipschitz continuous state function, and $g : \mathbb{R}^n \longrightarrow \mathbb{R}^k$ with $k \in \mathbb{Z}_{\geq 1}$ is a continuous output function. Equation \eqref{eq:sys} augmented with an initial condition $x(0) = x_0 \in \mathbb{R}^n$ shall be referred to as an initial value problem (IVP) associated with the system given by Eq. \eqref{eq:sys}. A differentiable function $x : I \longrightarrow \mathbb{R}^n$ with $I \subseteq \mathbb{R}$ being a non-degenerate interval such that $0 \in I$ is a solution of the IVP associated with the system given by Eq. \eqref{eq:sys} with the initial condition $x_0 \in \mathbb{R}^n$ if $x(0) = x_0$ and $\dot{x}(t) = f(x(t))$ for all $t \in I$. 
\end{definition}

The following definitions were adopted from \cite{bhat_nontangency-based_2003} and \cite{haddad_nonlinear_2011}:
\begin{definition}
For the remainder of this definition, suppose that the system given by Eq. \eqref{eq:sys} has a unique solution defined on $\mathbb{R}_{\geq 0}$ for every initial condition. Suppose that $x : \mathbb{R}_{\geq 0} \longrightarrow \mathbb{R}^n$ is a solution of an IVP associated with the system given by Eq. \eqref{eq:sys} with the initial condition $x(0) = z \in \mathbb{R}^n$. Then, $\mathcal{O}_z^{+} \triangleq \{ x(t) : t \in \mathbb{R}_{\geq 0} \}$ is the positive orbit of $z$. A set $U \subseteq \mathbb{R}^n$ is positively invariant with respect to the system given by Eq. \eqref{eq:sys} if and only if for every solution $x : \mathbb{R}_{\geq 0} \longrightarrow \mathbb{R}^n$ of the IVP with $x(0) = z \in U$, $x(t) \in U$ for all $t \in \mathbb{R}_{\geq 0}$. A set $U \subseteq \mathbb{R}^n$ is negatively invariant with respect to the system given by Eq. \eqref{eq:sys} if and only if for every $z \in U$ and $T \in \mathbb{R}_{\geq 0}$ there exists a solution $x : [0, T] \longrightarrow U$ of the IVP with $x(T) = z$. A set $U \subseteq \mathbb{R}^n$ is invariant with respect to the system given by Eq. \eqref{eq:sys} if and only if it is positively invariant and negatively invariant with respect to the system given by Eq. \eqref{eq:sys}. Suppose again that $x : \mathbb{R}_{\geq 0} \longrightarrow \mathbb{R}^n$ is a solution of an IVP associated with the system given by Eq. \eqref{eq:sys} with the initial condition $x(0) = z \in \mathbb{R}^n$. Then, $p \in \mathbb{R}^n$ is a positive limit point of $z$ if and only if there exists a nondecreasing sequence $\{ t_n \}_{n \in \mathbb{Z}_{\geq 1}}$ of positive real numbers such that $\lim_{n \rightarrow +\infty} t_n = +\infty$ and $\lim_{n \rightarrow +\infty} x(t_n) = p$. Furthermore, $\mathcal{O}_z^{+\infty} \subseteq \mathbb{R}^n$ shall be used to denote the positive limit set of $z$, that is, the set of all positive limit points of $z$. $\lim_{t \rightarrow +\infty} x (t) = A \subseteq \mathbb{R}^n$ if and only if for every $\varepsilon \in \mathbb{R}_{>0}$ there exists $T \in \mathbb{R}_{>0}$ such that $\inf_{p \in A} \lVert x(t) - p \rVert_2 < \varepsilon$ for all $t > T$. A continuous and strictly increasing function $\alpha : \mathbb{R}_{\geq 0} \longrightarrow \mathbb{R}_{\geq 0}$ is of class $\mathcal{K}_{\infty}$ if and only if $\alpha(0) = 0$ and $\lim_{x \rightarrow +\infty} \alpha (x) = +\infty$.
\end{definition}
The following definition was adopted from \cite{yoshizawa_stability_1975}:
\begin{definition}
The solutions of the system given by Eq. \eqref{eq:sys} are said to be equi-bounded if and only if for all $\alpha \in \mathbb{R}_{>0}$ there exists $\beta \in \mathbb{R}_{>0}$ such that $\lVert x(t) \rVert_2 < \beta$ for all $t \in [0, T)$ for every solution $x : [0, T) \longrightarrow \mathbb{R}^n$ with $T \in \mathbb{R}_{>0} \cup \{ +\infty \}$ starting from the initial condition $x(0) = x_0 \in \mathbb{R}^n$ such that $\lVert x_0 \rVert_2 \leq \alpha$.
\end{definition}
The following definition extends the concept of equi-boundedness to a system with an output:
\begin{definition}
The outputs of the system given by Eq. \eqref{eq:sys} are said to be equi-bounded if and only if for all $\alpha \in \mathbb{R}_{>0}$ there exists $\gamma \in \mathbb{R}_{>0}$ such that $\lVert y(t) \rVert_2 < \gamma$ for all $t \in [0, T)$ for every output $y : [0, T) \longrightarrow \mathbb{R}^n$ with $T \in \mathbb{R}_{>0} \cup \{ +\infty \}$ that corresponds to a solution $x : [0, T) \longrightarrow \mathbb{R}^n$ that starts from the initial condition $x(0) = x_0 \in \mathbb{R}^n$ such that $\lVert x_0 \rVert_2 \leq \alpha$.
\end{definition}
The following technical lemma showcases that the equi-boundedness of the solutions of the system associated with Eq. \eqref{eq:sys} implies the equi-boundedness of its outputs:
\begin{lem}\label{thm:output_bound}
Suppose that the solutions of the system given by Eq. \eqref{eq:sys} are equi-bounded. Then, the outputs of the system given by Eq. \eqref{eq:sys} are equi-bounded. 
\end{lem}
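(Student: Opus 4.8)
The plan is to leverage the equi-boundedness of the solutions to trap every relevant trajectory inside a single fixed compact ball, and then to push boundedness forward through the continuous output map $g$.

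First I would fix an arbitrary $\alpha \in \mathbb{R}_{>0}$ and invoke the hypothesis to obtain $\beta \in \mathbb{R}_{>0}$ such that every solution $x : [0, T) \longrightarrow \mathbb{R}^n$ (with $T \in \mathbb{R}_{>0} \cup \{ +\infty \}$) issuing from an initial condition $x_0$ with $\lVert x_0 \rVert_2 \leq \alpha$ satisfies $\lVert x(t) \rVert_2 < \beta$, hence $x(t) \in \bar{\mathbb{B}}(0, \beta)$, for all $t \in [0, T)$. The key structural point is that this $\beta$ is uniform: it depends on $\alpha$ alone and not on the particular solution or initial condition within the $\alpha$-ball.

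Next I would observe that $\bar{\mathbb{B}}(0, \beta)$ is compact (closed and bounded in $\mathbb{R}^n$, by the Heine--Borel theorem) and that $g$ is continuous, so $g(\bar{\mathbb{B}}(0, \beta))$ is compact, hence bounded, in $\mathbb{R}^k$. This furnishes a constant $\gamma \in \mathbb{R}_{>0}$ --- for instance $\gamma \triangleq 1 + \sup_{z \in \bar{\mathbb{B}}(0, \beta)} \lVert g(z) \rVert_2$, which is finite by compactness --- such that $\lVert g(z) \rVert_2 < \gamma$ for all $z \in \bar{\mathbb{B}}(0, \beta)$. Finally, for any output $y$ corresponding to a solution $x$ starting from $\lVert x_0 \rVert_2 \leq \alpha$, I would combine $y(t) = g(x(t))$ with $x(t) \in \bar{\mathbb{B}}(0, \beta)$ to conclude $\lVert y(t) \rVert_2 < \gamma$ on the entire interval of existence, which is precisely the equi-boundedness of the outputs since $\gamma$ was constructed depending only on $\alpha$.

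I do not anticipate a genuine obstacle: the argument is essentially an application of the Heine--Borel theorem together with the continuity of $g$. The only point meriting a moment's attention is verifying that the output bound $\gamma$ can be taken independently of the particular trajectory, and this is immediate because the state bound $\beta$ already enjoys that uniformity.
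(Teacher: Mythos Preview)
Your proposal is correct and follows essentially the same approach as the paper's proof: fix $\alpha$, use equi-boundedness to obtain a uniform $\beta$ confining all relevant trajectories to $\bar{\mathbb{B}}(0,\beta)$, and then use compactness together with the continuity of $g$ (the paper cites the Extreme Value Theorem, you cite Heine--Borel and continuous images of compacta) to produce the uniform output bound $\gamma$. The only cosmetic difference is that you make the choice of $\gamma$ explicit.
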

\begin{proof}
Fix $\alpha \in \mathbb{R}_{>0}$. Since the solutions are equi-bounded, obtain $\beta \in \mathbb{R}_{>0}$ such that $x([0, T)) \subseteq B \triangleq \bar{\mathbb{B}}(0, \beta)$ for every solution $x : [0, T) \longrightarrow \mathbb{R}^n$ with $T \in \mathbb{R}_{>0} \cup \{ +\infty \}$ starting from the initial condition $x(0) = x_0 \in \mathbb{R}^n$ such that $\lVert x_0 \rVert_2 \leq \alpha$. Since $B$ is a compact set and $g$ is a continuous function defined on a superset of $B$, $\lVert g(x) \rVert_2 < \gamma$ for all $x \in B$ for some $\gamma \in \mathbb{R}_{>0}$ as a consequence of the Extreme Value Theorem (e.g., see Theorem 2.4.15 in \cite{shurman_calculus_2016}). Suppose that $y : [0, T) \longrightarrow \mathbb{R}^k$ with $T \in \mathbb{R}_{>0}$ is the output associated with a solution $x : [0, T) \longrightarrow \mathbb{R}^n$ that starts from the initial condition $x(0) = x_0 \in \mathbb{R}^n$ with $\lVert x_0 \rVert_2 \leq \alpha$. Then, $\lVert y(t) \rVert_2 = \lVert g(x(t)) \rVert_2 < \gamma$ for all $t \in [0, T)$. 
\end{proof}

\section{Analysis of the NDBWSHCCM}\label{sec:analysis_BWSHCCM}

Suppose that the NDBWSHCCM is parameterized by $B \in \mathbb{R}_{\geq 0}$, $\mathit{\Gamma} \in [-B, B]$, $\kappa \in (0, 1)$, $\sigma \in \mathbb{R}_{\geq 0}$, $n, p \in \mathbb{R}_{\geq 1}$. Suppose also that $\kappa_c = 1 - \kappa$. Let $f : \mathbb{R}^3 \longrightarrow \mathbb{R}^3$ denote the state function associated with the NDBWSHCCM. It is given by 
\[
\begin{cases}
f_1(\mathbf{X}) \triangleq V \\
f_2(\mathbf{X}) \triangleq V - B \abs{Z}^{n - 1} Z \abs{V} - \mathit{\Gamma} \abs{Z}^n V \\
f_3(\mathbf{X}) \triangleq - \kappa \abs{X}^{p - 1} X - \kappa_c \abs{Z}^{p - 1} Z - \sigma \abs{X}^p V
\end{cases}
\]
for all $\mathbf{X} \triangleq (X, Z, V) \in \mathbb{R}^3$.\footnote{The informal notation $\mathbf{X} \triangleq (A_1, \ldots, A_k)$ will be used to introduce a symbol $\mathbf{X}$ for a vector in $\mathbb{R}^k$ with $k \in \mathbb{Z}_{\geq 1}$ and an additional symbol for each of its components.} The restrictions on the initial conditions that are stated in the main body of the article will be relaxed to $X(0) = X_0$, $Z(0) = Z_0$, $V(0) = V_0$ with $X_0, Z_0, V_0 \in \mathbb{R}$.

Define the Lyapunov candidate $\mathcal{V} : \mathbb{R}^3 \longrightarrow \mathbb{R}$ as
\[
\mathcal{V}(\mathbf{X}) \triangleq \frac{\kappa}{p + 1} \abs{X}^{p + 1} + \frac{\kappa_c}{p + 1} \abs{Z}^{p + 1} + \frac{1}{2} V^2
\]
for all $\mathbf{X} \triangleq (X, Z, V) \in \mathbb{R}^3$. Note that $\mathcal{V}$ is continuously differentiable and radially unbounded (i.e., $\lim_{\lVert \mathbf{X} \rVert_2 \rightarrow +\infty} \mathcal{V} (\mathbf{X}) = +\infty$), $\mathcal{V}(0) = 0$, and $\mathcal{V}(\mathbf{X}) > 0$ for all $\mathbf{X} \in \mathbb{R}^3 \setminus \{ 0 \}$. Introduce the notation $\dot{\mathcal{V}}(\mathbf{X}) \triangleq \langle \nabla \mathcal{V} (\mathbf{X}), f(\mathbf{X}) \rangle$ and note that
\[
\dot{\mathcal{V}}(\mathbf{X}) = -\sigma \abs{X}^p V^2 - \kappa_c \abs{Z}^{p + n - 1} \left( B \abs{Z} \abs{V} + \mathit{\Gamma} Z V \right)
\]
for all $\mathbf{X} \in \mathbb{R}^3$. Then,
\begin{lem}\label{thm:BWSHCCM_dV_leq_0}
Under the restrictions on the values of the parameters stated above, $\dot{\mathcal{V}}(\mathbf{X}) \leq 0$ for all $\mathbf{X} \in \mathbb{R}^3$. 
\end{lem}
\begin{proof}
Note that $-\sigma \abs{X}^p V^2 \leq 0$ for all $X, V \in \mathbb{R}$. Then, it suffices to show that $- \kappa_c \abs{Z}^{p + n - 1} \left( B \abs{Z} \abs{V} + \mathit{\Gamma} Z V \right) \leq 0$. If $Z = 0$, then the inequality above holds. Otherwise, it suffices to show that $0 \leq B \abs{Z} \abs{V} + \mathit{\Gamma} Z V$. Consider the following cases:
\begin{compactitem}
\item Case I: $Z V \leq 0$. Note that $\mathit{\Gamma} - B \leq 0$. Multiplying both sides by $Z V$ and taking into account that $Z V = - \abs{Z V} = -\abs{Z} \abs{V}$ results in the desired inequality.
\item Case II: $Z V \geq 0$. Note that $0 \leq B + \mathit{\Gamma}$. Multiplying both sides by $Z V$ and taking into account that $Z V = \abs{Z V} = \abs{Z} \abs{V}$ results in the desired inequality.
\end{compactitem}
Therefore, $\dot{\mathcal{V}}(\mathbf{X}) \leq 0$ for all $\mathbf{X} \in \mathbb{R}^3$.
\end{proof}

\begin{prop}\label{thm:BWSHCCM_EUB}
Under the restrictions on the values of the parameters stated above, there exists a unique solution of the NDBWSHCCM on any time interval $[0, T)$ with $T \in \mathbb{R}_{>0} \cup \{ +\infty \}$ for every initial condition $(X_0, Z_0, V_0) \in \mathbb{R}^3$. Furthermore, the solutions of the NDBWSHCCM are equi-bounded.
\end{prop}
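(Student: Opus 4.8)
The plan is to obtain the result from the standard existence--uniqueness theory for ordinary differential equations with a locally Lipschitz right-hand side, together with the a priori bound supplied by the Lyapunov function $\mathcal{V}$ and Lemma~\ref{thm:BWSHCCM_dV_leq_0}. First I would verify that $f$ is locally Lipschitz continuous on $\mathbb{R}^3$, so that the framework of this appendix applies: the only terms of $f$ that are not manifestly smooth are $\abs{X}^{p-1}X$, $\abs{Z}^{p-1}Z$, $\abs{Z}^{n-1}Z$, $\abs{Z}^n$, and $\abs{X}^p$, and since $n, p \geq 1$ each of these is, viewed as a function of a single real variable, continuously differentiable away from the origin with a derivative that stays bounded on bounded sets (indeed $t \mapsto \abs{t}^q$ is Lipschitz on every bounded interval when $q \geq 1$); hence $f$ is locally Lipschitz. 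The classical local existence--uniqueness theorem (Picard--Lindel\"of) then gives, for every $(X_0, Z_0, V_0) \in \mathbb{R}^3$, a unique maximal solution on an open interval containing $0$, and uniqueness on an arbitrary $[0,T)$ follows from local uniqueness by the usual connectedness argument (the set of times at which two solutions coincide is nonempty, relatively closed, and relatively open in $[0,T)$).

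Next I would rule out finite-time escape. Let $\mathbf{X} : [0, T_{\max}) \longrightarrow \mathbb{R}^3$ be the maximal forward solution through an arbitrary initial condition. By Lemma~\ref{thm:BWSHCCM_dV_leq_0}, $\frac{d}{dt}\,\mathcal{V}(\mathbf{X}(t)) = \dot{\mathcal{V}}(\mathbf{X}(t)) \leq 0$, so $\mathcal{V}(\mathbf{X}(t)) \leq \mathcal{V}(\mathbf{X}(0))$ for all $t \in [0, T_{\max})$. Since $\mathcal{V}$ is continuous and radially unbounded, the sublevel set $S \triangleq \{\, \mathbf{x} \in \mathbb{R}^3 : \mathcal{V}(\mathbf{x}) \leq \mathcal{V}(\mathbf{X}(0)) \,\}$ is closed and bounded, hence compact, and the solution remains in $S$ for all $t$. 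A maximal solution that stays in a fixed compact subset of the domain of $f$ must be defined on all of $[0, +\infty)$, so $T_{\max} = +\infty$; in particular the unique solution exists on every interval $[0, T)$ with $T \in \mathbb{R}_{>0} \cup \{+\infty\}$.

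For equi-boundedness, fix $a \in \mathbb{R}_{>0}$. By the Extreme Value Theorem $\mathcal{V}$ attains a finite maximum $c_a$ on the compact ball $\bar{\mathbb{B}}(0, a)$, and radial unboundedness of $\mathcal{V}$ makes the sublevel set $\{\, \mathbf{x} \in \mathbb{R}^3 : \mathcal{V}(\mathbf{x}) \leq c_a \,\}$ bounded, say contained in $\mathbb{B}(0, \beta_a)$ with $\beta_a$ depending only on $a$. For any solution started from $\mathbf{X}_0$ with $\lVert \mathbf{X}_0 \rVert_2 \leq a$, the argument of the previous paragraph puts $\mathbf{X}(t)$ in that sublevel set for all $t$, so $\lVert \mathbf{X}(t) \rVert_2 < \beta_a$ throughout the interval of existence, which is exactly the required equi-boundedness. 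The only point that genuinely needs care is the local Lipschitz verification in the first step, i.e. checking that the fractional powers do not spoil Lipschitz continuity at $X = 0$ or $Z = 0$; this holds precisely because $n, p \geq 1$, but it should be spelled out. Everything else is a routine combination of the continuation principle with Lemma~\ref{thm:BWSHCCM_dV_leq_0}.
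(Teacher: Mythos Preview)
Your proof is correct and follows essentially the same approach as the paper's: both use local Lipschitz continuity of $f$ together with Lemma~\ref{thm:BWSHCCM_dV_leq_0} and the radial unboundedness of $\mathcal{V}$ to trap solutions in compact sublevel sets, obtaining global existence via the continuation principle and equi-boundedness from the fact that the sublevel set depends only on $\mathcal{V}(\mathbf{X}(0))$. The only cosmetic difference is that the paper packages the equi-boundedness step by citing a theorem of Yoshizawa (using a comparison function $\phi \in \mathcal{K}_\infty$ with $\phi(\lVert \mathbf{X} \rVert_2) \leq \mathcal{V}(\mathbf{X})$) and the extension step by citing a corollary in Haddad--Chellaboina, whereas you spell both arguments out directly.
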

\begin{proof}
Note that the state function associated with the NDBWSHCCM is locally Lipschitz and $f(0) = 0$. Note also that, taking into account the properties of $\mathcal{V}$ that were exposed above, there exists $\phi \in \mathcal{K}_{\infty}$ such that $\phi (\lVert \mathbf{X} \rVert_2) \leq \mathcal{V} (\mathbf{X})$ for all $\mathbf{X} \in \mathbb{R}^3$ (e.g., see \cite{hahn_stability_1967, haddad_nonlinear_2011, kellett_compendium_2014}). Lastly, note that $\dot{\mathcal{V}}(\mathbf{X}) \leq 0$ for all $\mathbf{X} \in \mathbb{R}^3$ by Lemma \ref{thm:BWSHCCM_dV_leq_0}. Then, by Theorem 8.7 in \cite{yoshizawa_stability_1975}, the solutions of the IVPs associated with the NDBWSHCCM are equi-bounded. Since the state function associated with the NDBWSHCCM is locally Lipschitz, a unique solution forward in time of the NDBWSHCCM exists on a maximal interval of existence of the form $[0, T)$ with $T \in \mathbb{R}_{>0} \cup \{ +\infty \}$ for every initial condition (e.g., see Theorem 4.1.1 in \cite{schaeffer_ordinary_2016} or Theorem 2.25 in \cite{haddad_nonlinear_2011}). Since each solution defined on a maximal interval of existence is bounded, each solution lies in a compact set. Therefore, by the theorem on the extendability of the solutions (e.g., see Theorem 4.1.2 in \cite{schaeffer_ordinary_2016} or Corollary 2.5 in \cite{haddad_nonlinear_2011}), each solution can be extended to a unique solution on $[0, +\infty)$.
\end{proof}

Define the set
\[
\mathcal{E} \triangleq \left\{ \left(X, - \left(\frac{\kappa}{\kappa_c} \right)^{\frac{1}{p}} X, 0 \right) : X \in \mathbb{R} \right\}
\]
\begin{prop}\label{thm:BWSHCCM_eq}
Under the restrictions on the values of the parameters stated above, $\mathcal{E}$ is the set of all equilibrium points of the NDBWSHCCM.
\end{prop}
\begin{proof}
Suppose that $(X, Z, V) \in \mathbb{R}^3$ is an equilibrium point of the NDBWSHCCM. Then, $V = 0$ and $\abs{Z}^{p - 1} Z = - (\kappa/\kappa_c) \abs{X}^{p - 1} X$. Therefore, taking into account the restrictions on the values of the parameters, it can be shown that $Z = -(\kappa/\kappa_c)^{1/p} X$. Thus, $(X, Z, V) \in \mathcal{E}$. Suppose $(X, Z, V) \in \mathcal{E}$. In this case, $V = 0$ and $Z = - \left(\kappa/\kappa_c \right)^{1/p} X$. Then, it can be verified by substitution that $f \left(X, Z, V \right) = 0$. Thus, $(X, Z, V)$ is, indeed, an equilibrium point of the NDBWSHCCM.
\end{proof}

Define the following sets:
\[
\mathcal{U}_1 \triangleq \{ (0, 0, V) : V \in \mathbb{R} \}
\] 
\[
\mathcal{U}_2 \triangleq \{ (X, Z, 0) : X, Z \in \mathbb{R} \}
\] 
\begin{lem}\label{thm:BWSHCCM_inv_dV_1}
Suppose that the restrictions on the values of the parameters stated above are amended as follows: $B \in \mathbb{R}_{> 0}$, $\mathit{\Gamma} \in (-B, B)$, $\sigma \in \mathbb{R}_{> 0}$. Then, $\dot{\mathcal{V}}^{-1}(0) = \mathcal{U}_1 \cup \mathcal{U}_2$ and the largest invariant set that is contained in $\dot{\mathcal{V}}^{-1}(0)$ is $\mathcal{E}$.
\end{lem}
\begin{proof}
Suppose that $(X, Z, V) \in \dot{\mathcal{V}}^{-1}(0)$. Then, $\dot{\mathcal{V}}(X, Z, V) = 0$. From the proof of Lemma \ref{thm:BWSHCCM_dV_leq_0},
\[
\begin{cases}
\sigma \abs{X}^p V^2 = 0 \\
\kappa_c \abs{Z}^{p + n - 1} \left( B \abs{Z} \abs{V} + \mathit{\Gamma} Z V \right) = 0
\end{cases}
\]
Since $B \neq \mathit{\Gamma}$ and $B \neq -\mathit{\Gamma}$, $B \abs{Z} \abs{V} + \mathit{\Gamma} Z V = 0$ if and only if either $Z = 0$ or $V = 0$. Thus, either $X = Z = 0$ or $V = 0$. That $\dot{\mathcal{V}}(X, Z, V) = 0$ if either $X = Z = 0$ or $V = 0$ can be verified directly by substitution. Thus, $\dot{\mathcal{V}}^{-1}(0) = \mathcal{U}_1 \cup \mathcal{U}_2$. 

Suppose that $\mathbf{X} : \mathbb{R}_{\geq 0} \longrightarrow \mathcal{U}_1 \cup \mathcal{U}_2$ is a solution of the NDBWSHCCM starting from the initial condition $\mathbf{X}(0) = \mathbf{Y} \in \mathbb{R}^3$. Suppose that there exists $t \in \mathbb{R}_{\geq 0}$ such that $\mathbf{X}(t) \in \mathcal{U}_1 \cap \mathcal{U}_2 = \{ 0 \}$. Since $0 \in \mathcal{E}$, by the uniqueness of the solutions, $\mathbf{X}(t) = 0$ for all $t \in \mathbb{R}_{\geq 0}$. Thence, $\mathcal{O}_{\mathbf{Y}}^{+} = \{ 0 \}$. Note that $\mathcal{U}_1 \setminus \{ 0 \}$ and $\mathcal{U}_2 \setminus \{ 0 \}$ are separated. Thus, since $\mathcal{O}_{\mathbf{Y}}^{+}$ is connected and $\mathcal{O}_{\mathbf{Y}}^{+} \subseteq \mathcal{U}_1 \cup \mathcal{U}_2$, by Lemma \ref{thm:con_sep}, either $\mathcal{O}_{\mathbf{Y}}^{+} = \{ 0 \}$ or $\mathcal{O}_{\mathbf{Y}}^{+} \subseteq \mathcal{U}_1 \setminus \{ 0 \}$ or $\mathcal{O}_{\mathbf{Y}}^{+} \subseteq \mathcal{U}_2 \setminus \{ 0 \}$:
\begin{compactitem}
\item Case I: $\mathcal{O}_{\mathbf{Y}}^{+} = \{ 0 \}$. Then, $\mathcal{O}_{\mathbf{Y}}^{+} \subseteq \mathcal{E}$.
\item Case II: $\mathcal{O}_{\mathbf{Y}}^{+} \subseteq \mathcal{U}_1 \setminus \{ 0 \}$. Note that $X(t) = 0$ for all $t \in \mathbb{R}_{\geq 0}$. Therefore, $\dot{X}(t) = 0$ for all $t \in \mathbb{R}_{\geq 0}$. Fix $t \in \mathbb{R}_{\geq 0}$. Note that $\dot{X}(t) = V(t) \neq 0$. Therefore, a contradiction is reached and $\mathcal{O}_{\mathbf{Y}}^{+} \not\subseteq \mathcal{U}_1 \setminus \{ 0 \}$.
\item Case III: $\mathcal{O}_{\mathbf{Y}}^{+} \subseteq \mathcal{U}_2 \setminus \{ 0 \}$. Thus, $V(t) = 0$ for every $t \in \mathbb{R}_{\geq 0}$. Therefore, $\dot{V}(t) = 0$ for all $t \in \mathbb{R}_{\geq 0}$. Fix $t \in \mathbb{R}_{\geq 0}$. Then, $\dot{V}(t) = - \kappa \abs{X(t)}^{p - 1} X(t) - \kappa_c \abs{Z(t)}^{p - 1} Z(t) = 0$. Thus, $(X(t), Z(t), V(t)) \in \mathcal{E}$ (see the proof of Proposition \ref{thm:BWSHCCM_eq}). Therefore, by the uniqueness of the solutions, $(X(t), Z(t), V(t)) \in \mathcal{E}$ for all $t \in \mathbb{R}_{\geq 0}$. Thus, $\mathcal{O}_{\mathbf{Y}}^{+} \subseteq \mathcal{E}$.
\end{compactitem}
In summary, $\mathcal{O}_{\mathbf{Y}}^{+} \subseteq \mathcal{E}$. Thus, the largest invariant set that is contained in $\mathcal{U}_1 \cup \mathcal{U}_2$ is also contained in $\mathcal{E}$. Since $\mathcal{E}$ is an invariant set, it is also the largest invariant set that is contained in $\mathcal{U}_1 \cup \mathcal{U}_2 = \dot{\mathcal{V}}^{-1}(0)$.
\end{proof}

Define $\mathcal{S}_{\mathbf{X}} \triangleq \{ \mathbf{Y} \in \mathbb{R}^3 : \mathcal{V} (\mathbf{Y}) \leq \mathcal{V} (\mathbf{X}) \}$ for all $\mathbf{X} \in \mathbb{R}^3$. Then,
\begin{prop}\label{thm:BWSHCCM_convergence}
Suppose that the restrictions on the values of the parameters stated above are amended as follows: $B \in \mathbb{R}_{> 0}$, $\mathit{\Gamma} \in (-B, B)$, $\sigma \in \mathbb{R}_{> 0}$. Furthermore, suppose that $\mathbf{X} : \mathbb{R}_{\geq 0} \longrightarrow \mathbb{R}^3$ is a solution of the NDBWSHCCM. Suppose also that $\mathbf{X}(0) = \mathbf{Y}$. Then, $\mathcal{O}_{\mathbf{Y}}^{+\infty}$ is a nonempty, compact, connected, invariant set such that $\lim_{t \rightarrow +\infty} \mathbf{X}(t) = \mathcal{O}_{\mathbf{Y}}^{+\infty}$ and $\mathcal{O}_{\mathbf{Y}}^{+\infty} \subseteq \mathcal{S}_{\mathbf{Y}} \cap \mathcal{E}$.
\end{prop}
\begin{proof}
Note that $\mathcal{O}_{\mathbf{Y}}^{+}$ is bounded by Proposition \ref{thm:BWSHCCM_EUB}. Thus, $\mathcal{O}_{\mathbf{Y}}^{+\infty}$ is a nonempty, compact, connected, invariant set, and $\lim_{t \rightarrow +\infty} \mathbf{X}(t) = \mathcal{O}_{\mathbf{Y}}^{+\infty}$ (e.g, see Proposition 5.1 in \cite{bhat_nontangency-based_2003} or Theorem 2.41 in \cite{haddad_nonlinear_2011}). Note that, by Lemma \ref{thm:BWSHCCM_inv_dV_1}, the largest invariant set contained in $\dot{\mathcal{V}}^{-1}(0)$ is $\mathcal{E}$. Suppose that $\mathcal{P}$ is the largest invariant set contained in $\mathcal{S}_{\mathbf{Y}}$. Then, $\mathcal{O}_{\mathbf{Y}}^{+\infty} \subseteq \mathcal{P} \cap \mathcal{E} \subseteq \mathcal{S}_{\mathbf{Y}} \cap \mathcal{E}$ (e.g., see Proposition 5.3 in \cite{bhat_nontangency-based_2003}).
\end{proof}

\section{Analysis of the NDBWMCM}\label{sec:analysis_BWMCM}

Consider the NDBWMCM given by Eq. \eqref{eq:BWMCM_nd} and Eq. \eqref{eq:BWMCM_nd_output}. Suppose that the NDBWMCM is parameterized by $B \in \mathbb{R}_{> 0}$, $\mathit{\Gamma} \in (-B, B)$, $\kappa \in (0, 1)$, $\sigma \in \mathbb{R}_{> 0}$, $n \in \mathbb{R}_{\geq 1}$, $p \in \mathbb{R}_{\geq 2} \cup \{ 1 \}$. Suppose also that $\kappa_c = 1 - \kappa$. $f : \mathbb{R}^4 \longrightarrow \mathbb{R}^4$ shall denote the state function associated with the NDBWMCM given by 
\[
\begin{cases}
f_1(\mathbf{X}) \triangleq \kappa \sigma \abs{Y}^{p - 1} Y + \kappa_c \sigma \abs{Z}^{p - 1} Z \\
f_2(\mathbf{X}) \triangleq W \\
f_3(\mathbf{X}) \triangleq W - B \abs{Z}^{n - 1} Z \abs{W} - \mathit{\Gamma} \abs{Z}^n W \\
f_4(\mathbf{X}) \triangleq - \frac{1}{\sigma} f_1(\mathbf{X}) - \kappa p \sigma \abs{Y}^{p - 1} f_2(\mathbf{X}) - \kappa_c p \sigma \abs{Z}^{p - 1} f_3(\mathbf{X}) \\
\end{cases}
\]
for all $\mathbf{X} \triangleq (R, Y, Z, W) \in \mathbb{R}^4$. Let $g : \mathbb{R}^4 \longrightarrow \mathbb{R}^2$ denote the output function associated with the NDBWMCM given by 
\[
\begin{cases}
g_1(\mathbf{X}) \triangleq R + Y\\
g_2(\mathbf{X}) \triangleq \kappa \sigma \abs{Y}^{p - 1} Y + \kappa_c \sigma \abs{Z}^{p - 1} Z + W
\end{cases}
\]
for all $\mathbf{X} \in \mathbb{R}^4$. Unless stated otherwise, the restrictions on the initial conditions that are stated in the main body of the article will be relaxed to $R(0) = R_0$, $Y(0) = Y_0$, $Z(0) = Z_0$, $W(0) = W_0$ with $R_0, Y_0, Z_0, W_0 \in \mathbb{R}$.

For notational convenience, introduce the abbreviation $Q \triangleq g_2$. Define the Lyapunov candidate $\mathcal{V} : \mathbb{R}^4 \longrightarrow \mathbb{R}$ given by 
\[
\mathcal{V} (\mathbf{X}) \triangleq \left( R + \sigma Q(\mathbf{X}) \right)^2 + \frac{Q(\mathbf{X})^2}{2} + \frac{\kappa}{p + 1} \abs{Y}^{p + 1} + \frac{\kappa_c}{p + 1} \abs{Z}^{p + 1}
\]
for all $\mathbf{X} \in \mathbb{R}^4$. Note that $\mathcal{V}$ is continuously differentiable, radially unbounded, $\mathcal{V}(0) = 0$, and $\mathcal{V}(\mathbf{X}) > 0$ for all $\mathbf{X} \in \mathbb{R}^4 \setminus \{ 0 \}$.\footnote{It should be noted that the proof that the Lyapunov candidate for the NDBWMCM is radially unbounded may not appear to be entirely trivial, but it is still a routine exercise in real analysis.}

Define $h : \mathbb{R}^4 \longrightarrow \mathbb{R}$ as 
\[
h(\mathbf{X}) \triangleq \kappa \abs{Y}^{p - 1} Y + \kappa_c\abs{Z}^{p - 1} Z
\]
Note that 
\[
\dot{\mathcal{V}}(\mathbf{X}) = - \sigma h(\mathbf{X})^2 - \kappa_c \abs{Z}^{p + n - 1} \left( B \abs{Z} \abs{W} + \mathit{\Gamma} Z W \right)
\]
Then, 
\begin{lem}\label{thm:BWMCM_dV_leq_0}
Under the restrictions on the values of the parameters stated above, $\dot{\mathcal{V}}(\mathbf{X}) \leq 0$ for all $\mathbf{X} \in \mathbb{R}^4$. 
\end{lem}
\begin{proof}
Note that $- \sigma h(\mathbf{X})^2 \leq 0$ for all $\mathbf{X} \in \mathbb{R}^4$. The remainder of the proof is essentially identical to a part of the proof of Lemma \ref{thm:BWSHCCM_dV_leq_0}.
\end{proof}

\begin{prop}\label{thm:BWMCM_EUB}
Under the restrictions on the values of the parameters stated above, there exists a unique solution of the NDBWMCM on any time interval $[0, T)$ with $T \in \mathbb{R}_{>0} \cup \{ +\infty \}$ for every initial condition $(R_0, Y_0, Z_0, W_0) \in \mathbb{R}^4$. Furthermore, the solutions and the outputs of the NDBWMCM are equi-bounded.
\end{prop}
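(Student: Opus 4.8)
The plan is to follow the blueprint of the proof of Proposition~\ref{thm:BWSHCCM_EUB} almost verbatim, replacing $\mathbb{R}^3$ by $\mathbb{R}^4$ and the NDBWSHCCM objects by their NDBWMCM counterparts. Concretely, I would first verify that the state function $f$ of the NDBWMCM is locally Lipschitz continuous and satisfies $f(0)=0$; then invoke the properties of the Lyapunov candidate $\mathcal{V}$ already recorded above (continuously differentiable, positive definite, radially unbounded), so that there is $\phi\in\mathcal{K}_{\infty}$ with $\phi(\lVert\mathbf{X}\rVert_2)\le\mathcal{V}(\mathbf{X})$ for all $\mathbf{X}\in\mathbb{R}^4$; next use Lemma~\ref{thm:BWMCM_dV_leq_0} to get $\dot{\mathcal{V}}\le 0$; and finally apply Theorem~8.7 in~\cite{yoshizawa_stability_1975} to conclude that the solutions are equi-bounded.

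The one genuinely new point, and the step I expect to be the main obstacle, is establishing local Lipschitz continuity of $f$, in particular of the component $f_4$. The scalar maps $s\mapsto\abs{s}^{q-1}s$ are locally Lipschitz for every $q\ge 1$ (their derivative $q\abs{s}^{q-1}$ is locally bounded), so the terms $\abs{Y}^{p-1}Y$, $\abs{Z}^{p-1}Z$ in $f_1$, and $\abs{Z}^{n-1}Z$, $\abs{Z}^n$ in $f_3$ cause no difficulty. However, $f_4$ also contains the factors $\abs{Y}^{p-1}$ and $\abs{Z}^{p-1}$ multiplying $f_2$ and $f_3$, and the map $s\mapsto\abs{s}^{p-1}$ is locally Lipschitz only when $p-1\ge 1$ or $p-1=0$, i.e. precisely when $p\in\mathbb{R}_{\ge 2}\cup\{1\}$ (for $1<p<2$ it is merely H\"older at the origin). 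This is exactly the hypothesis on $p$ in the statement and the reason it is stronger than the condition $p\ge 1$ used for the NDBWSHCCM. Granting this, $f$ is a finite sum of products of locally Lipschitz, locally bounded functions, hence locally Lipschitz, and $f(0)=0$ is immediate by substitution.

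Once equi-boundedness of the solutions is in hand, the remainder is routine. Local Lipschitz continuity of $f$ yields, for each initial condition $(R_0,Y_0,Z_0,W_0)\in\mathbb{R}^4$, a unique solution on a maximal interval $[0,T)$ with $T\in\mathbb{R}_{>0}\cup\{+\infty\}$ (Theorem~2.25 in~\cite{haddad_nonlinear_2011}); since that solution is bounded it stays in a compact set, so by the extendability result (Corollary~2.5 in~\cite{haddad_nonlinear_2011}) it extends to a unique solution on $[0,+\infty)$, hence exists and is unique on any $[0,T)$. Equi-boundedness of the outputs then follows from equi-boundedness of the solutions together with continuity of the output function $g$, directly via Lemma~\ref{thm:output_bound}.

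One subtlety I would double-check along the way is the claimed positive definiteness and radial unboundedness of $\mathcal{V}$ (already asserted in the text): writing $Q=g_2$, one has $\mathcal{V}(\mathbf{X})=0$ only if $Q=0$, $R+\pi_1 Q=0$, $Y=0$, $Z=0$, which forces $R=0$ and then $W=Q=0$; and along any sequence with $\lVert\mathbf{X}\rVert_2\to+\infty$ at least one of the coercive groups $(R+\pi_1 Q)^2$, $Q^2$, $\abs{Y}^{p+1}$, $\abs{Z}^{p+1}$ must blow up. These facts are standard and need not be belabored in the written proof.
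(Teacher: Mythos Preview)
Your proposal is correct and follows essentially the same approach as the paper, which simply refers back to the proof of Proposition~\ref{thm:BWSHCCM_EUB} (with Lemma~\ref{thm:BWMCM_dV_leq_0} in place of Lemma~\ref{thm:BWSHCCM_dV_leq_0}) and then invokes Lemma~\ref{thm:output_bound} for the outputs. Your explicit discussion of why $p\in\mathbb{R}_{\ge 2}\cup\{1\}$ is precisely what is needed for the factors $\lvert Y\rvert^{p-1}$ and $\lvert Z\rvert^{p-1}$ in $f_4$ to be locally Lipschitz is a welcome clarification that the paper leaves implicit.
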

\begin{proof}
The proof of the global existence, uniqueness, and equi-boundedness of the solutions follow from arguments similar to the ones that were used in the proof of Proposition \ref{thm:BWSHCCM_EUB}, taking into account Lemma \ref{thm:BWMCM_dV_leq_0}. The equi-boundedness of the outputs follows from Lemma \ref{thm:output_bound}.
\end{proof}

Define 
\[
\mathcal{E} \triangleq \left\{ \left( R, Y, - \left( \frac{\kappa}{\kappa_c}\right)^{\frac{1}{p}} Y, 0 \right) : R, Y  \in \mathbb{R} \right\}
\]
\begin{prop}\label{thm:BWMCM_eq}
Under the restrictions on the values of the parameters stated above, $\mathcal{E}$ is the set of all equilibrium points of the NDBWMCM.
\end{prop}
\begin{proof}
Suppose that $(R, Y, Z, W) \in \mathbb{R}^4$ is an equilibrium point of the NDBWMCM. Then, $W = 0$ and $\abs{Z}^{p - 1} Z = - (\kappa/\kappa_c) \abs{Y}^{p - 1} Y$. Therefore, taking into account the restrictions on the values of the parameters, it can be shown that $Z = -(\kappa/\kappa_c)^{1/p} Y$. Thus, $(R, Y, Z, W) \in \mathcal{E}$. Suppose that $(R, Y, Z, W) \in \mathcal{E}$. Then, $W = 0$ and $Z = - \left(\kappa/\kappa_c \right)^{1/p} Y$. Then, it can be verified by substitution that $f \left(R, Y, Z, W \right) = 0$. Thus, $(R, Y, Z, W)$ is an equilibrium point of the NDBWMCM.
\end{proof}

Define the following sets:
\[
\mathcal{U}_1 \triangleq \{ (R, 0, 0, W) : R, W \in \mathbb{R} \}
\] 
\[
\mathcal{U}_2 \triangleq \left\{ \left( R, Y, -\left( \frac{\kappa}{\kappa_c} \right)^{\frac{1}{p}} Y, 0 \right) : R, Y \in \mathbb{R} \right\} = \mathcal{E}
\] 
\begin{lem}\label{thm:BWMCM_inv_dV_1}
Under the restrictions on the values of the parameters stated above, $\dot{\mathcal{V}}^{-1}(0) = \mathcal{U}_1 \cup \mathcal{U}_2$ and the largest invariant set that is contained in $\dot{\mathcal{V}}^{-1}(0)$ is $\mathcal{E}$.
\end{lem}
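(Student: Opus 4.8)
The plan is to mirror the proof of Lemma~\ref{thm:BWSHCCM_inv_dV_1}, the one real novelty being that $\mathcal{U}_1 \cap \mathcal{U}_2$ is now a whole line rather than the single point $\{0\}$. \emph{Step 1 (identifying $\dot{\mathcal{V}}^{-1}(0)$).} The formula for $\dot{\mathcal{V}}$ stated just above the lemma is a sum of the two nonpositive terms $-\pi_1 h(\mathbf{X})^2$ and $-\alpha_c\abs{Z}^{p+n-1}(\pi_2\abs{Z}\abs{W}+\pi_3 ZW)$ (nonpositivity of the second was established inside the proof of Lemma~\ref{thm:BWMCM_dV_leq_0}), so $\dot{\mathcal{V}}(\mathbf{X})=0$ iff both terms vanish. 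The first forces $\alpha\abs{Y}^{p-1}Y+\alpha_c\abs{Z}^{p-1}Z=0$, which, exactly as in the proof of Proposition~\ref{thm:BWMCM_eq}, is equivalent to $Z=-(\alpha/\alpha_c)^{1/p}Y$. For the second, $\pi_2>\abs{\pi_3}$ gives $\pi_2\abs{Z}\abs{W}+\pi_3 ZW\geq(\pi_2-\abs{\pi_3})\abs{Z}\abs{W}$, so (since $p+n-1>0$) the term vanishes iff $Z=0$ or $W=0$. If $Z=0$, the first condition forces $Y=0$, i.e. $\mathbf{X}\in\mathcal{U}_1$; if $W=0$, then $Z=-(\alpha/\alpha_c)^{1/p}Y$ puts $\mathbf{X}\in\mathcal{U}_2$. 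Both inclusions reverse by substituting the two parametrizations back into $\dot{\mathcal{V}}$, so $\dot{\mathcal{V}}^{-1}(0)=\mathcal{U}_1\cup\mathcal{U}_2$.

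\emph{Step 2 (largest invariant set).} Since $\mathcal{U}_2=\mathcal{E}$ consists of equilibria (Proposition~\ref{thm:BWMCM_eq}), $\mathcal{E}$ is invariant; it remains to show no invariant set reaches outside it. Put $L\triangleq\mathcal{U}_1\cap\mathcal{U}_2=\{(R,0,0,0):R\in\mathbb{R}\}$ and observe $L\subseteq\mathcal{E}$, so every point of $L$ is an equilibrium. Let $\mathbf{X}:\mathbb{R}_{\geq 0}\longrightarrow\mathcal{U}_1\cup\mathcal{U}_2$ be a solution with $\mathbf{X}(0)=\mathbf{Y}$. If the orbit ever meets $L$, uniqueness of solutions pins $\mathbf{X}$ to that equilibrium and $\mathcal{O}_{\mathbf{Y}}^{+}\subseteq L\subseteq\mathcal{E}$. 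Otherwise $\mathcal{O}_{\mathbf{Y}}^{+}\subseteq(\mathcal{U}_1\setminus L)\cup(\mathcal{U}_2\setminus L)$; these two sets are separated (the first has $Y=0,\ W\neq 0$, the second $Y\neq 0,\ W=0$, and each of their closures meets the other set only in $L$), so the connected set $\mathcal{O}_{\mathbf{Y}}^{+}$ lies entirely in one of them by Lemma~\ref{thm:con_sep}. The alternative $\mathcal{O}_{\mathbf{Y}}^{+}\subseteq\mathcal{U}_1\setminus L$ is impossible, because there $Y(t)\equiv 0$ forces $W(t)=\dot{Y}(t)\equiv 0$, contradicting $W\neq 0$; hence $\mathcal{O}_{\mathbf{Y}}^{+}\subseteq\mathcal{U}_2\setminus L\subseteq\mathcal{E}$. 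In all cases $\mathcal{O}_{\mathbf{Y}}^{+}\subseteq\mathcal{E}$, so the largest invariant subset of $\dot{\mathcal{V}}^{-1}(0)$ is contained in $\mathcal{E}$, and, being invariant itself, $\mathcal{E}$ is that set.

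\emph{Expected obstacle.} Every computation here is elementary; the only delicate point is the overlap line $L$. In contrast to the NDBWSHCCM, where $\mathcal{U}_1\cap\mathcal{U}_2=\{0\}$, here the two pieces share all of $L$, so the separation/connectedness argument must be run on $\mathcal{U}_i\setminus L$, and one must first dispose of the case where the orbit enters $L$ directly. Fortunately $L\subseteq\mathcal{E}$, so this extra case produces nothing new, and the rest of the argument goes through essentially verbatim from the Kelvin-Voigt analysis.
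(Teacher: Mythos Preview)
Your proposal is correct and follows essentially the same approach as the paper's own proof: identify $\dot{\mathcal{V}}^{-1}(0)$ by forcing each nonpositive summand to vanish, then use the connectedness of orbits together with the separation of $\mathcal{U}_1\setminus L$ and $\mathcal{U}_2\setminus L$ (where $L=\mathcal{U}_1\cap\mathcal{U}_2$) and the contradiction $Y\equiv 0\Rightarrow W=\dot{Y}\equiv 0$ to rule out the $\mathcal{U}_1\setminus L$ branch. You also correctly anticipated the one structural change relative to Lemma~\ref{thm:BWSHCCM_inv_dV_1}, namely that $L$ is a line of equilibria rather than $\{0\}$, and you handled it exactly as the paper does.
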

\begin{proof}
Suppose that $\dot{\mathcal{V}}(R, Y, Z, W) = 0$. From the proof of Lemma \ref{thm:BWMCM_dV_leq_0},
\[
\begin{cases}
\kappa \sigma \abs{Y}^{p - 1} Y + \kappa_c \sigma \abs{Z}^{p - 1} Z = 0 \\
\kappa_c \abs{Z}^{p + n - 1} \left( B \abs{Z} \abs{W} + \mathit{\Gamma} Z W \right) = 0
\end{cases}
\]
Then, by the proof of Proposition \ref{thm:BWMCM_eq}, $Z = -\left( \kappa / \kappa_c \right)^{1/p} Y$. Thus, either $W = 0$ and $Z = -\left( \kappa /\kappa_c \right)^{1/p} Y$ or $Y = Z = 0$. That $\dot{\mathcal{V}}(R, Y, Z, W) = 0$ if either $W = 0$ and $Z = -\left( \kappa /\kappa_c \right)^{1/p} Y$ or $Y = Z = 0$ can be verified directly by substitution. Thence, $\dot{\mathcal{V}}^{-1}(0) = \mathcal{U}_1 \cup \mathcal{U}_2$.

Suppose that $\mathbf{X} : \mathbb{R}_{\geq 0} \longrightarrow \mathcal{U}_1 \cup \mathcal{U}_2$ is a solution of the NDBWMCM starting from the initial condition $\mathbf{X}(0) = \mathbf{Y} \in \mathbb{R}^4$. Since every point in $\mathcal{E}$ is an equilibrium point, by uniqueness of solutions, either $\mathbf{X} : \mathbb{R}_{\geq 0} \longrightarrow \mathcal{E}$ or $\mathbf{X} : \mathbb{R}_{\geq 0} \longrightarrow \mathcal{U}_1 \setminus \mathcal{E}$. Suppose that $\mathbf{X} : \mathbb{R}_{\geq 0} \longrightarrow \mathcal{U}_1 \setminus \mathcal{E}$. Note that $Y(t) = 0$ for all $t \in \mathbb{R}_{\geq 0}$. Therefore, $\dot{Y}(t) = 0$ for all $t \in \mathbb{R}_{\geq 0}$. Fix $t \in \mathbb{R}_{\geq 0}$. Note that $\dot{Y}(t) = W(t) \neq 0$. Therefore, a contradiction is reached. Thus, $\mathbf{X} : \mathbb{R}_{\geq 0} \longrightarrow \mathcal{E}$. Since $\mathcal{E}$ is an invariant set, it is also the largest invariant set that is contained in $\mathcal{U}_1 \cup \mathcal{U}_2 = \dot{\mathcal{V}}^{-1}(0)$.
\end{proof}

Define $\mathcal{S}_{\mathbf{X}} \triangleq \{ \mathbf{Y} \in \mathbb{R}^4 : \mathcal{V} (\mathbf{Y}) \leq \mathcal{V} (\mathbf{X}) \}$
for all $\mathbf{X} \in \mathbb{R}^4$. Then,
\begin{prop}\label{thm:BWMCM_convergence}
Suppose that the restrictions on the values of the parameters stated above are in effect. Suppose also that $\mathbf{X} : \mathbb{R}_{\geq 0} \longrightarrow \mathbb{R}^4$ is a solution of the NDBWMCM. Suppose also that $\mathbf{X}(0) = \mathbf{Y}$. Then, $\mathcal{O}_{\mathbf{Y}}^{+\infty}$ is a nonempty, compact, connected, invariant set such that $\lim_{t \rightarrow +\infty} \mathbf{X}(t) = \mathcal{O}_{\mathbf{Y}}^{+\infty}$ and $\mathcal{O}_{\mathbf{Y}}^{+\infty} \subseteq \mathcal{S}_{\mathbf{Y}} \cap \mathcal{E}$.
\end{prop}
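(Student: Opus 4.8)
The plan is to mirror the argument used for Proposition~\ref{thm:BWSHCCM_convergence}, since all of the structural ingredients needed are already in place for the four-dimensional model. First I would invoke Proposition~\ref{thm:BWMCM_EUB} to conclude that the positive orbit $\mathcal{O}_{\mathbf{Y}}^{+}$ is bounded; together with the fact that the state function $f$ associated with the NDBWMCM is continuous (indeed locally Lipschitz), this confines the trajectory $\mathbf{X}$ to a compact subset of $\mathbb{R}^4$ for all $t \in \mathbb{R}_{\geq 0}$. The standard properties of the positive limit set of a precompact forward trajectory of an autonomous system (e.g., Proposition~5.1 in \cite{bhat_nontangency-based_2003} or Theorem~2.41 in \cite{haddad_nonlinear_2011}) then directly yield that $\mathcal{O}_{\mathbf{Y}}^{+\infty}$ is nonempty, compact, connected, and invariant, and that $\lim_{t \rightarrow +\infty} \mathbf{X}(t) = \mathcal{O}_{\mathbf{Y}}^{+\infty}$ in the set-distance sense defined in Appendix~\ref{sec:notation}.

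Next I would bring in the Lyapunov structure. By Lemma~\ref{thm:BWMCM_dV_leq_0}, $\mathcal{V}$ is nonincreasing along solutions, so $\mathcal{V}(\mathbf{X}(t)) \leq \mathcal{V}(\mathbf{Y})$ for all $t$; since $\mathcal{V}$ is continuous, every point of $\mathcal{O}_{\mathbf{Y}}^{+\infty}$ satisfies $\mathcal{V} \leq \mathcal{V}(\mathbf{Y})$, i.e. $\mathcal{O}_{\mathbf{Y}}^{+\infty} \subseteq \mathcal{S}_{\mathbf{Y}}$. Moreover $\mathcal{V}$ is constant on the limit set, which together with invariance forces $\mathcal{O}_{\mathbf{Y}}^{+\infty} \subseteq \dot{\mathcal{V}}^{-1}(0)$ and hence into the largest invariant subset of $\dot{\mathcal{V}}^{-1}(0)$, identified as $\mathcal{E}$ by Lemma~\ref{thm:BWMCM_inv_dV_1}. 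Combining the two inclusions gives $\mathcal{O}_{\mathbf{Y}}^{+\infty} \subseteq \mathcal{S}_{\mathbf{Y}} \cap \mathcal{E}$. In practice this step can be packaged by citing the invariance-principle statement Proposition~5.3 in \cite{bhat_nontangency-based_2003}, exactly as in the proof of Proposition~\ref{thm:BWSHCCM_convergence}: if $\mathcal{Q}$ denotes the largest invariant set contained in $\mathcal{S}_{\mathbf{Y}}$, then $\mathcal{O}_{\mathbf{Y}}^{+\infty} \subseteq \mathcal{Q} \cap \mathcal{E} \subseteq \mathcal{S}_{\mathbf{Y}} \cap \mathcal{E}$.

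I do not anticipate a genuine obstacle. Every prerequisite — global existence, uniqueness, and equi-boundedness (Proposition~\ref{thm:BWMCM_EUB}), the sign condition $\dot{\mathcal{V}} \leq 0$ (Lemma~\ref{thm:BWMCM_dV_leq_0}), and the characterization of the largest invariant set in $\dot{\mathcal{V}}^{-1}(0)$ (Lemma~\ref{thm:BWMCM_inv_dV_1}) — has already been established for the NDBWMCM. The only point requiring any care is purely bookkeeping: the NDBWMCM evolves in $\mathbb{R}^4$ instead of $\mathbb{R}^3$ and $\mathcal{E}$ is a two-parameter family rather than a curve, but none of the cited limit-set or invariance-principle results are dimension-sensitive, so the transcription from the NDBWSHCCM proof is essentially mechanical. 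If one preferred a self-contained argument over citing \cite{bhat_nontangency-based_2003}, the modest extra effort would be to reprove that a bounded forward orbit has a nonempty, connected, invariant positive limit set and that $\mathcal{V}$ is constant on it — both classical and unchanged from the three-dimensional case.
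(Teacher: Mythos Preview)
Your proposal is correct and mirrors the paper's own proof essentially line for line: the paper invokes Proposition~\ref{thm:BWMCM_EUB} for boundedness of $\mathcal{O}_{\mathbf{Y}}^{+}$, cites Proposition~5.1 in \cite{bhat_nontangency-based_2003} (or Theorem~2.41 in \cite{haddad_nonlinear_2011}) for the nonempty/connected/convergence claims, and then uses Lemma~\ref{thm:BWMCM_inv_dV_1} together with Proposition~5.3 in \cite{bhat_nontangency-based_2003} to obtain $\mathcal{O}_{\mathbf{Y}}^{+\infty} \subseteq \mathcal{P} \cap \mathcal{E} \subseteq \mathcal{S}_{\mathbf{Y}} \cap \mathcal{E}$, where $\mathcal{P}$ is the largest invariant subset of $\mathcal{S}_{\mathbf{Y}}$. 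Your additional remarks (compactness of the limit set, constancy of $\mathcal{V}$ on it, the dimension-insensitivity observation) are all sound but go slightly beyond what the paper writes out.
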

\begin{proof}
Note that $\mathcal{O}_{\mathbf{Y}}^{+}$ is bounded by Proposition \ref{thm:BWMCM_EUB}. Thus, $\mathcal{O}_{\mathbf{Y}}^{+\infty}$ is a nonempty, compact, connected, invariant set, and $\lim_{t \rightarrow +\infty} \mathbf{X}(t) = \mathcal{O}_{\mathbf{Y}}^{+\infty}$ (e.g, see Proposition 5.1 in \cite{bhat_nontangency-based_2003} or Theorem 2.41 in \cite{haddad_nonlinear_2011}). Note that, by Lemma \ref{thm:BWMCM_inv_dV_1}, the largest invariant set contained in $\dot{\mathcal{V}}^{-1}(0)$ is $\mathcal{E}$. Suppose that $\mathcal{P}$ is the largest invariant set contained in $\mathcal{S}_{\mathbf{Y}}$. Then, $\mathcal{O}_{\mathbf{Y}}^{+\infty} \subseteq \mathcal{P} \cap \mathcal{E} \subseteq \mathcal{S}_{\mathbf{Y}} \cap \mathcal{E}$ (e.g., see Proposition 5.3 in \cite{bhat_nontangency-based_2003}).
\end{proof}

\bibliographystyle{asmejour} 

\bibliography{template.bib} 

\end{document}